\titlespacing*{\section}{0pt}{*2}{*1}
\titlespacing*{\subsection}{0pt}{*2}{*1}
\newtheorem{proposition}{Proposition}
\newtheorem{theorem}{Theorem}
\newtheorem{lemma}{Lemma}
\theoremstyle{definition}
\newtheorem{condition}{Condition}
\newcommand\relphantom[1]{\mathrel{\phantom{#1}}}
\DeclareMathOperator\sgn{sgn}
\DeclareMathOperator\tr{tr}
\DeclareMathOperator\Cov{Cov}
\DeclareMathOperator\Var{Var}
\DeclareMathOperator\CV{CV}
\DeclareMathOperator\diag{diag}
\DeclareMathOperator*\argmin{\arg\min}
\def\bA{\mathbf{A}}
\def\bB{\mathbf{B}}
\def\bF{\mathbf{F}}
\def\bI{\mathbf{I}}
\def\bQ{\mathbf{Q}}
\def\bS{\mathbf{S}}
\def\bT{\mathbf{T}}
\def\bU{\mathbf{U}}
\def\bW{\mathbf{W}}
\def\bX{\mathbf{X}}
\def\bY{\mathbf{Y}}
\def\bx{\mathbf{x}}
\def\bGamma{\boldsymbol{\Gamma}}
\def\bOmega{\boldsymbol{\Omega}}
\def\balpha{\boldsymbol{\alpha}}
\def\bgamma{\boldsymbol{\gamma}}
\def\bomega{\boldsymbol{\omega}}
\def\bmu{\boldsymbol{\mu}}
\def\bzero{\mathbf{0}}
\def\bone{\mathbf{1}}
\def\cA{\mathcal{A}}
\def\cU{\mathcal{U}}
\def\what{\widehat}
\def\ve{\varepsilon}
\begin{document}

\begin{titlepage}
\setstretch{1.24}
\title{Large Covariance Estimation for Compositional Data via Composition-Adjusted Thresholding}
\author{Yuanpei Cao, Wei Lin, and Hongzhe Li}
\date{}
\maketitle
\thispagestyle{empty}

\footnotetext{Yuanpei Cao is Ph.D. Candidate (E-mail: \emph{yuanpeic@sas.upenn.edu}) and Hongzhe Li is Professor (E-mail: \emph{hongzhe@upenn.edu}), Department of Biostatistics and Epidemiology, Perelman School of Medicine, University of Pennsylvania, Philadelphia, PA 19104. Wei Lin is Assistant Professor, School of Mathematical Sciences and Center for Statistical Science, Peking University, Beijing 100871, China (E-mail: \emph{weilin@math.pku.edu.cn}). Cao and Li's research was supported in part by NIH grants CA127334 and GM097505. Lin's research was supported in part by NSFC grant 71532001, the Recruitment Program of Global Experts, and a start-up grant from Peking University.}

\begin{abstract}
High-dimensional compositional data arise naturally in many applications such as metagenomic data analysis. The observed data lie in a high-dimensional simplex, and conventional statistical methods often fail to produce sensible results due to the unit-sum constraint. In this article, we address the problem of covariance estimation for high-dimensional compositional data, and introduce a composition-adjusted thresholding (COAT) method under the assumption that the basis covariance matrix is sparse. Our method is based on a decomposition relating the compositional covariance to the basis covariance, which is approximately identifiable as the dimensionality tends to infinity. The resulting procedure can be viewed as thresholding the sample centered log-ratio covariance matrix and hence is scalable for large covariance matrices. We rigorously characterize the identifiability of the covariance parameters, derive rates of convergence under the spectral norm, and provide theoretical guarantees on support recovery. Simulation studies demonstrate that the COAT estimator outperforms some naive thresholding estimators that ignore the unique features of compositional data. We apply the proposed method to the analysis of a microbiome dataset in order to understand the dependence structure among bacterial taxa in the human gut.\bigskip

\noindent\emph{Key words}: Adaptive thresholding; Basis covariance; Centered log-ratio covariance; High dimensionality; Microbiome; Regularization.
\end{abstract}
\end{titlepage}

\section{Introduction}
Compositional data, which represent the proportions or fractions of a whole, arise naturally in a wide range of applications; examples include geochemical compositions of rocks, household patterns of expenditures, species compositions of biological communities, and topic compositions of documents, among many others. This article is particularly motivated by the metagenomic analysis of microbiome data. The human microbiome is the totality of all microbes at various body sites, whose importance in human health and disease has increasingly been recognized. Recent studies have revealed that microbiome composition varies based on diet, health, and the environment \citep{Huma:fram:2012}, and may play a key role in complex diseases such as obesity, atherosclerosis, and Crohn's disease \citep{Turn:Hama:Yats:Cant:Dunc:core:2009,Koet:Wang:Levi:Buff:Org:inte:2013,Lewi:Chen:Wu:Bush:infl:2015}.

With the development of next-generation sequencing technologies, it is now possible to survey the microbiome composition using direct DNA sequencing of either marker genes or the whole metagenomes. After aligning these sequence reads to the reference microbial genomes, one can quantify the relative abundances of microbial taxa. These sequencing-based microbiome studies, however, only provide a relative, rather than absolute, measure of the abundances of community components. The counts comprising these data (e.g., 16S rRNA gene reads or shotgun metagenomic reads) are set by the amount of genetic material extracted from the community or the sequencing depth, and analysis typically begins by normalizing the observed data by the total number of counts. The resulting fractions thus fall into a class of high-dimensional compositional data that we focus in this article. The high dimensionality refers to the fact that the number of taxa may be comparable to or much larger than the sample size.

An important question in metagenomic studies is to understand the co-occurrence and co-exclusion relationship between microbial taxa, which would provide valuable insights into the complex ecology of microbial communities \citep{Faus:Sath:Izar:Sega:Geve:micr:2012}. Standard correlation analysis from the raw proportions, however, can lead to spurious results due to the unit-sum constraint; the proportions tend to be correlated even if the absolute abundances are independent. Such undesired effects should be removed in an analysis in order to make valid inferences about the underlying biological processes. The compositional effects are further magnified by the low diversity of microbiome data, that is, a few taxa make up the overwhelming majority of the microbiome \citep{Frie:Alm:infe:2012}.

Let $\bX=(X_1,\dots,X_p)^T$ be a composition of $p$ components (taxa) satisfying the simplex constraint
\[
X_j>0,\quad j=1,\dots,p,\quad\sum_{j=1}^pX_j=1.
\]
Owing to the difficulties arising from the simplex constraint, it has been a long-standing question how to appropriately model, estimate, and interpret the covariance structure of compositional data. The pioneering work of \citet{Aitc:stat:1982,Aitc:stat:2003} introduced several equivalent matrix specifications of compositional covariance structures via the log-ratios of components. Statistical methods based on these covariance models respect the unique features of compositional data and prove useful in a variety of applications such as geochemical analysis. A potential disadvantage of these models, however, is that they lack a direct interpretation in the usual sense of covariances and correlations; as a result, it is unclear how to impose certain structures such as sparsity in high dimensions, which is crucial for our applications to microbiome data analysis.

Covariance matrix estimation is of fundamental importance in high-dimensional data analysis and has attracted much recent interest. It is well known that the sample covariance matrix performs poorly in high dimensions and regularization is thus indispensable. \citet{Bick:Levi:cova:2008} and \citet{ElK:oper:2008} introduced regularized estimators by hard thresholding for large covariance matrices that satisfy certain notions of sparsity. \citet{Roth:Levi:Zhu:gene:2009} considered a more general class of thresholding functions, and \citet{Cai:Liu:adap:2011} proposed adaptive thresholding that adapts to the variability of individual entries. Exploiting a factor model structure, \citet{Fan:Fan:Lv:high:2008} proposed a factor-based method for high-dimensional covariance matrix estimation. \citet{Fan:Liao:Minc:larg:2013} extended the work by considering a conditional sparsity structure and developed a POET method by thresholding principal orthogonal complements.

In this article, we address the problem of covariance estimation for high-dimensional compositional data. Let $\bW=(W_1,\dots,W_p)^T$ with $W_j>0$ for all $j$ be a vector of latent variables, called the \emph{basis}, that generate the observed data via the normalization
\begin{equation}\label{eq:norm}
X_j=\frac{W_j}{\sum_{i=1}^pW_i},\quad j=1,\dots,p.
\end{equation}
Estimating the covariance structure of $\bW$ has traditionally been considered infeasible owing to the apparent lack of identifiability. By exploring a decomposition relating the compositional covariance to the basis covariance, we find, however, that the nonidentifiability vanishes asymptotically as the dimensionality grows under certain sparsity assumptions. More specifically, define the \emph{basis covariance matrix} $\bOmega_0=(\omega_{ij}^0)_{p\times p}$ by
\begin{equation}\label{eq:omega}
\omega_{ij}^0=\Cov(Y_i,Y_j),
\end{equation}
where $Y_j=\log W_j$. Then $\bOmega_0$ is approximately identifiable as long as it belongs to a class of large sparse covariance matrices.

The somewhat surprising ``blessing of dimensionality'' allows us to develop a simple, two-step method by first extracting a rank-2 component from the decomposition and then estimating the sparse component $\bOmega_0$ by thresholding the residual matrix. The resulting procedure can equivalently be viewed as thresholding the sample centered log-ratio covariance matrix, and hence is optimization-free and scalable for large covariance matrices. We call our method \emph{composition-adjusted thresholding} (COAT), which removes the ``coat'' of compositional effects from the covariance structure. We derive rates of convergence under the spectral norm and provide theoretical guarantees on support recovery. Simulation studies demonstrate that the COAT estimator outperforms some naive thresholding estimators that ignore the unique features of compositional data. We illustrate our method by analyzing a microbiome dataset in order to understand the dependence structure among bacterial taxa in the human gut.

The covariance relationship, which was due to \citet[sec.~4.11]{Aitc:stat:2003}, has recently been exploited to develop algorithms for inferring correlation networks from metagenomic data \citep{Frie:Alm:infe:2012,Fang:Huan:Zhao:Deng:ccla:2015,Ban:An:Jiang:inve:2015}. Our contributions here are to turn the idea into a principled approach to sparse covariance matrix estimation and provide statistical insights into the issue of identifiability and the impacts of dimensionality. Our method also bears some resemblance to the POET method proposed by \citet{Fan:Liao:Minc:larg:2013} in that underlying both methods is a low-rank plus sparse matrix decomposition. The rank-2 component in our method, however, arises from the covariance structure of compositional data rather than a factor model assumption. As a result, it can be obtained by simple algebraic operations without computing the principal components.

The rest of the article is organized as follows. Section 2 reviews a covariance relationship and addresses the issue of identifiability. Section 3 introduces the COAT methodology. Section 4 investigates the theoretical properties of the COAT estimator in terms of convergence rates and support recovery. Simulation studies and an application to human gut microbiome data are presented in Sections 5 and 6, respectively. We conclude the article with some discussion in Section 7 and relegate all proofs to the Appendix.

\section{Identifiability of the Covariance Model}
We first introduce some notation. Denote by $\|\cdot\|_1$, $\|\cdot\|_2$, $\|\cdot\|_F$, and $\|\cdot\|_{\max}$ the matrix $L_1$-norm, spectral norm, Frobenius norm, and entrywise $L_\infty$-norm, defined for a matrix $\bA=(a_{ij})$ by $\|\bA\|_1=\max_j\sum_i|a_{ij}|$, $\|\bA\|_2=\sqrt{\lambda_{\max}(\bA^T\bA)}$, $\|\bA\|_F=\sqrt{\sum_{i,j}a_{ij}^2}$, and $\|\bA\|_{\max}=\max_{i,j}|a_{ij}|$, where $\lambda_{\max}(\cdot)$ denotes the largest eigenvalue.

In the latent variable covariance model \eqref{eq:norm} and \eqref{eq:omega}, the basis covariance matrix $\bOmega_0$ is the parameter of interest. One of the matrix specifications of compositional covariance structures introduced by \citet{Aitc:stat:2003} is the \emph{variation matrix} $\bT_0=(\tau_{ij}^0)_{p\times p}$ defined by
\begin{equation}\label{eq:tau}
\tau_{ij}^0=\Var(\log(X_i/X_j)).
\end{equation}
In view of the relationship \eqref{eq:norm}, we can decompose $\tau_{ij}^0$ as
\begin{align}
\tau_{ij}^0&=\Var(\log W_i-\log W_j)\notag\\
&=\Var(Y_i)+\Var(Y_j)-2\Cov(Y_i,Y_j)\notag\\
&=\omega_{ii}^0+\omega_{jj}^0-2\omega_{ij}^0,\label{eq:decomp_ij}
\end{align}
or in matrix form,
\begin{equation}\label{eq:decomp}
\bT_0=\bomega_0\bone^T+\bone\bomega_0^T-2\bOmega_0,
\end{equation}
where $\bomega_0=(\omega_{11}^0,\dots,\omega_{pp}^0)^T$ and $\bone=(1,\dots,1)^T$. Corresponding to the many-to-one relationship between bases and compositions, the basis covariance matrix $\bOmega_0$ is unidentifiable from the decomposition \eqref{eq:decomp}, since $\bomega_0\bone^T+\bone\bomega_0^T$ and $\bOmega_0$ are in general not orthogonal to each other (with respect to the usual Euclidean inner product). In fact, using the \emph{centered log-ratio covariance matrix} $\bGamma_0=(\gamma_{ij}^0)_{p\times p}$ defined by
\[
\gamma_{ij}^0=\Cov\{\log(X_i/g(\bX)),\log(X_j/g(\bX))\},
\]
where $g(\bx)=(\prod_{j=1}^px_j)^{1/p}$ is the geometric mean of a vector $\bx=(x_1,\dots,x_p)^T$, we can similarly write
\begin{align*}
\tau_{ij}^0&=\Var\{\log(X_i/g(\bX))-\log(X_j/g(\bX))\}\\
&=\Var\{\log(X_i/g(\bX))\}+\Var\{\log(X_j/g(\bX))\}-2\Cov\{\log(X_i/g(\bX),\log(X_j/g(\bX))\}\\
&=\gamma_{ii}^0+\gamma_{jj}^0-2\gamma_{ij}^0,
\end{align*}
or in matrix form,
\begin{equation}\label{eq:decomp_orth}
\bT_0=\bgamma_0\bone^T+\bone\bgamma_0^T-2\bGamma_0,
\end{equation}
where $\bgamma_0=(\gamma_{11}^0,\dots,\gamma_{pp}^0)^T$ and $\bone=(1,\dots,1)^T$. Unlike \eqref{eq:decomp}, the following proposition shows that \eqref{eq:decomp_orth} is an orthogonal decomposition and hence the components $\bgamma_0\bone^T+\bone\bgamma_0^T$ and $\bGamma_0$ are identifiable. In addition, by comparing the decompositions \eqref{eq:decomp} and \eqref{eq:decomp_orth}, we can bound the difference between $\bOmega_0$ and its identifiable counterpart $\bGamma_0$ as follows.

\begin{proposition}\label{prop:ident}
The components $\bgamma_0\bone^T+\bone\bgamma_0^T$ and $\bGamma_0$ in the decomposition \eqref{eq:decomp_orth} are orthogonal to each other. Moreover, for the covariance parameters $\bOmega_0$ and $\bGamma_0$ in the decompositions \eqref{eq:decomp} and \eqref{eq:decomp_orth},
\[
\|\bOmega_0-\bGamma_0\|_{\max}\le3p^{-1}\|\bOmega_0\|_1.
\]
\end{proposition}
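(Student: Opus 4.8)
The plan is to reduce the entire proposition to a single algebraic identity and then read off both claims by elementary linear algebra. The starting observation is that the centered log-ratios coincide with the centered log-basis. Writing $Y_j=\log W_j$ and using $X_j=W_j/\sum_iW_i$, the common factor $\log\sum_iW_i$ cancels in the ratio $X_j/g(\bX)$, so that
\[
\log\frac{X_j}{g(\bX)}=Y_j-\frac1p\sum_{k=1}^pY_k.
\]
In vector form this reads $\bU=\bQ\bY$, where $U_j=\log(X_j/g(\bX))$ and $\bQ=\bI-p^{-1}\bone\bone^T$ is the centering projection. Taking covariances and using the symmetry of $\bQ$ yields the clean representation $\bGamma_0=\bQ\bOmega_0\bQ$, which is the workhorse for both parts.

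For the orthogonality claim, I would exploit that $\bQ\bone=\bzero$ (equivalently, $\sum_jU_j\equiv0$), so that $\bGamma_0\bone=\bQ\bOmega_0\bQ\bone=\bzero$ and, by symmetry, $\bone^T\bGamma_0=\bzero^T$. Computing the trace inner product of the two components and using that $\bgamma_0\bone^T+\bone\bgamma_0^T$ is symmetric gives $\langle\bgamma_0\bone^T+\bone\bgamma_0^T,\bGamma_0\rangle=\tr\{(\bgamma_0\bone^T+\bone\bgamma_0^T)\bGamma_0\}$. The cyclic property of the trace reduces the two summands to $\bgamma_0^T(\bGamma_0\bone)$ and $(\bone^T\bGamma_0)\bgamma_0$, both of which vanish. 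Hence the two components are orthogonal.

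For the bound, I would expand $\bGamma_0=\bQ\bOmega_0\bQ$ entrywise. Writing $r_i=\sum_k\omega_{ik}^0$ for the $i$th row sum of $\bOmega_0$ and $s=\sum_{k,l}\omega_{kl}^0$ for its grand sum, this produces
\[
\omega_{ij}^0-\gamma_{ij}^0=p^{-1}r_i+p^{-1}r_j-p^{-2}s.
\]
I would then bound each piece by the $L_1$-norm: since $\bOmega_0$ is symmetric, $\|\bOmega_0\|_1$ is both the maximum absolute row sum and column sum, so $|r_i|\le\|\bOmega_0\|_1$, and summing over rows gives $|s|\le\sum_k|r_k|\le p\|\bOmega_0\|_1$. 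The triangle inequality then yields $|\omega_{ij}^0-\gamma_{ij}^0|\le p^{-1}\|\bOmega_0\|_1+p^{-1}\|\bOmega_0\|_1+p^{-2}\cdot p\|\bOmega_0\|_1=3p^{-1}\|\bOmega_0\|_1$ uniformly in $i,j$, which is exactly the asserted inequality $\|\bOmega_0-\bGamma_0\|_{\max}\le3p^{-1}\|\bOmega_0\|_1$.

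There is no deep obstacle here; the whole proposition hinges on recognizing the cancellation that delivers $\bGamma_0=\bQ\bOmega_0\bQ$, after which both parts are short computations. The only points requiring a little care are the grand-sum estimate $|s|\le p\|\bOmega_0\|_1$ (note the extra factor $p$ relative to $|r_i|$) and the observation that for a symmetric matrix the $L_1$-norm simultaneously controls maximum absolute row and column sums; keeping these sharp is what pins the constant at exactly $3$.
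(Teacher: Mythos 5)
Your proof is correct, and it reaches the paper's conclusions by what is at bottom the same computation, but organized around a device the paper never makes explicit: the representation $\bGamma_0=\bQ\bOmega_0\bQ$ with $\bQ=\bI-p^{-1}\bone\bone^T$. The paper instead invokes two facts from Aitchison (2003) as black boxes: Property 4.6 (that $\bGamma_0$ is symmetric with zero row sums), which feeds the same trace computation you perform for orthogonality, and identity (4.35), $\gamma_{ij}^0=\omega_{ij}^0-\omega_{i\cdot}^0-\omega_{j\cdot}^0+\omega_{\cdot\cdot}^0$, which is exactly your entrywise expansion $\omega_{ij}^0-\gamma_{ij}^0=p^{-1}r_i+p^{-1}r_j-p^{-2}s$ in different notation; from there the two triangle-inequality arguments coincide, including the constant $3$. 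What your route buys is self-containedness: the single observation that $\log(X_j/g(\bX))=Y_j-p^{-1}\sum_kY_k$ (the normalizing factor $\log\sum_iW_i$ cancels in the ratio) delivers both cited facts at once, since $\bQ\bone=\bzero$ yields the zero row sums and expanding $\bQ\bOmega_0\bQ$ yields the entrywise identity. Your two auxiliary estimates, $|r_i|\le\|\bOmega_0\|_1$ (using symmetry of $\bOmega_0$ so that the column-sum norm also controls row sums) and $|s|\le p\|\bOmega_0\|_1$, are precisely what the paper uses implicitly in its final display, so the constant is pinned at $3$ in both arguments.
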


Proposition \ref{prop:ident} entails that the covariance parameter $\bOmega_0$ is \emph{approximately} identifiable as long as $\|\bOmega_0\|_1=o(p)$. In particular, suppose that $\bOmega_0$ belongs to a class of sparse covariance matrices considered by \citet{Bick:Levi:cova:2008},
\begin{equation}\label{eq:class}
\cU(q,s_0(p),M)\equiv\left\{\bOmega\colon\bOmega\succ0,\max_j\omega_{jj}\le M,\max_i\sum_{j=1}^p|\omega_{ij}|^q\le s_0(p)\right\},
\end{equation}
where $0\le q<1$ and $\bOmega\succ0$ denotes that $\bOmega$ is positive definite. Then
\[
\|\bOmega_0\|_1=\max_i\sum_{j=1}^p|\omega_{ij}^0|^{1-q}|\omega_{ij}^0|^q\le\max_i\sum_{j=1}^p(\omega_{ii}^0\omega_{jj}^0)^{(1-q)/2}|\omega_{ij}^0|^q\le M^{1-q}s_0(p),
\]
and hence the parameters $\bOmega_0$ and $\bGamma_0$ are asymptotically indistinguishable when $s_0(p)=o(p)$. This allows us to use $\bGamma_0$ as a proxy for $\bOmega_0$ and greatly facilitates the development of new methodology and associated theory. The intuition behind the approximate identifiability under the sparsity assumption is that the rank-2 component $\bomega_0\bone^T+\bone\bomega_0^T$ represents a global effect that spreads across all rows and columns, while the sparse component $\bOmega_0$ represents a local effect that is confined to individual entries.

Also of interest is the \emph{exact} identifiability of $\bOmega_0$ over $L_0$-balls, which has been studied by \citet{Fang:Huan:Zhao:Deng:ccla:2015} and \citet{Ban:An:Jiang:inve:2015}. The following result provides a sufficient and necessary condition for the exact identifiability of $\bOmega_0$ by confining it to an $L_0$-ball.

\begin{proposition}\label{prop:ident_exact}
Suppose that $\bOmega_0$ belongs to the $L_0$-ball
\[
\mathcal{B}_0(s_e(p))\equiv\left\{\bOmega\colon\sum_{(i,j)\colon i<j}I(\omega_{ij}\ne0)\le s_e(p)\right\},
\]
where $p\ge5$. Then there exist no two values of $\bOmega_0$ that correspond to the same $\bT_0$ in \eqref{eq:decomp} if and only if $s_e(p)<(p-1)/2$.
\end{proposition}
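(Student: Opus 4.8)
The plan is to reduce the statement to a linear-algebra computation followed by a combinatorial optimization. Write the map in \eqref{eq:decomp} as $\Phi(\bOmega)=\bomega\bone^T+\bone\bomega^T-2\bOmega$ with $\bomega=\diag(\bOmega)$, so that two matrices yield the same $\bT_0$ exactly when their difference lies in $\ker\Phi$. Setting $\Phi(\bOmega)=\bzero$ forces $2\omega_{ij}=\omega_{ii}+\omega_{jj}$ for all $i,j$; the diagonal equations are vacuous, and with $d_i=\omega_{ii}$ the off-diagonal ones give $\omega_{ij}=(d_i+d_j)/2$. Hence $\ker\Phi$ consists of the matrices $\mathbf{K}_d=\tfrac12(d\bone^T+\bone d^T)$, $d\in\mathbb{R}^p$, and $\mathbf{K}_d=\bzero$ iff $d=\bzero$. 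Nonidentifiability over $\mathcal{B}_0(s_e(p))$ thus means there exist distinct $\bOmega,\bOmega'\in\mathcal{B}_0(s_e(p))$ with $\bOmega-\bOmega'=\mathbf{K}_d\neq\bzero$.

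Next I would reduce the existence of such a pair to a count for a single kernel element. The off-diagonal support of $\bOmega-\bOmega'$ lies in the union of the two supports, so it has at most $2s_e(p)$ entries; conversely, given any $\mathbf{K}_d\neq\bzero$ whose off-diagonal support $S$ has $|S|\le2s_e(p)$, I split $S=S_1\sqcup S_2$ with $|S_i|\le s_e(p)$, let $\bOmega$ carry the entries of $\mathbf{K}_d$ on $S_1$ (with arbitrary diagonal) and zeros elsewhere off the diagonal, and set $\bOmega'=\bOmega-\mathbf{K}_d$; then $\bOmega'$ has off-diagonal support $S_2$, both matrices lie in $\mathcal{B}_0(s_e(p))$, and they differ by $\mathbf{K}_d$. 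Therefore identifiability holds if and only if $2s_e(p)<N^\ast$, where $N^\ast=\min_{d\neq\bzero}\#\{(i,j):i<j,\ d_i+d_j\neq0\}$ is the least number of nonzero off-diagonal entries of a nonzero kernel matrix, so it suffices to prove $N^\ast=p-1$ for $p\ge5$.

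Computing $N^\ast$ is the heart of the proof and the step I expect to be most delicate. Equivalently I maximize the number of vanishing off-diagonal entries, i.e.\ pairs $i<j$ with $d_i=-d_j$. Grouping indices by the value of $d_i$, such a pair occurs only within the class $\{i:d_i=0\}$ or between two classes carrying opposite values $v$ and $-v$; it never joins the zero class to a nonzero class, nor two classes with non-opposite values. Writing $z$ for the number of indices with $d_i=0$, $k=p-z$ for the rest, and $a_v,b_v$ for the counts of values $v$ and $-v$, the number of vanishing pairs is $\binom{z}{2}+\sum_{v>0}a_vb_v$, and $\sum_{v>0}a_vb_v\le\lfloor k/2\rfloor\lceil k/2\rceil=\lfloor k^2/4\rfloor$, the maximum being attained by assigning all $k$ nonzero indices to a single opposite pair $\{v,-v\}$.

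It remains to show $\binom{z}{2}+\lfloor(p-z)^2/4\rfloor\le\binom{p-1}{2}$ for all $0\le z\le p-1$ (the constraint $d\neq\bzero$ forcing $z\le p-1$), with equality at $z=p-1$; this gives $N^\ast\ge\binom{p}{2}-\binom{p-1}{2}=p-1$ and, via $d$ supported on a single coordinate, $N^\ast=p-1$. I would bound the left side by the convex function $\binom{z}{2}+(p-z)^2/4$, which over $[0,p-1]$ is maximized at an endpoint: its value is $\binom{p-1}{2}+\tfrac14$ at $z=p-1$ and $p^2/4$ at $z=0$, and both are at most $\binom{p-1}{2}+\tfrac14$ precisely when $p\ge5$ (the binding requirement, from $z=0$, is $p^2/4\le\binom{p-1}{2}+\tfrac14$, i.e.\ $p+1\le2(p-2)$). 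As the left side is integer-valued, it is then at most $\binom{p-1}{2}$. This also shows the hypothesis $p\ge5$ is sharp: for $p=4$ the vector $d=(1,1,-1,-1)$ kills every off-diagonal entry of $\mathbf{K}_d$ except $(1,2)$ and $(3,4)$, so $N^\ast=2<p-1$ and the threshold $(p-1)/2$ fails. Combining $N^\ast=p-1$ with the reduction, identifiability holds iff $2s_e(p)<p-1$, i.e.\ $s_e(p)<(p-1)/2$.
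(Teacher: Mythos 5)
Your proof is correct, and while it shares the paper's overall skeleton, it proves the central combinatorial fact by a genuinely different argument. Both proofs reduce the question to the same object: the kernel of the linear map $\bOmega\mapsto\bT$, which consists of the matrices $\tfrac12(d\bone^T+\bone d^T)$, and both hinge on the lemma that a nonzero matrix of this form has at least $p-1$ nonzero off-diagonal entries when $p\ge5$. The paper proves this lemma locally: it fixes an index with $\alpha_1\ne0$, counts the zeros in the first row, observes that each such zero forces $\alpha_j=-\alpha_1$ and hence creates nonzero entries among those indices, obtains the preliminary bound $p-2$, and then rules out equality by a separate case analysis. You prove it globally: partitioning indices by the value of $d_i$, counting the vanishing pairs as $\binom{z}{2}+\sum_{v>0}a_vb_v$, and maximizing by convexity. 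Your route buys more: it computes the extremal count exactly (so $N^\ast=p-1$), it isolates precisely where $p\ge5$ enters (the inequality $p^2/4\le\binom{p-1}{2}+\tfrac14$, i.e.\ $(p-1)(p-5)\ge0$), and it yields the sharpness example $d=(1,1,-1,-1)$ at $p=4$, which the paper never makes explicit. For the necessity direction, the paper exhibits an explicit pair of block covariance matrices, whereas you give a generic splitting construction from any minimal kernel element; specialized to $d$ supported on one coordinate, the two constructions essentially coincide. One small point you should add: since $\bOmega_0$ is a covariance matrix, the two indistinguishable values in the necessity direction ought to be positive definite (the paper's explicit example is, for $|c|$ small); in your construction this costs nothing, because the free diagonal can be taken large enough that both $\bOmega$ and $\bOmega'=\bOmega-\mathbf{K}_d$ are strictly diagonally dominant, without changing their off-diagonal supports or their difference. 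Note also that both your reduction (splitting $|S|\le 2s_e(p)$ into two halves of size at most $s_e(p)$) and the paper's counterexample implicitly treat $s_e(p)$ as an integer; this is a shared, harmless convention rather than a gap in your argument.
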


A counterexample is provided in the proof of Proposition \ref{prop:ident_exact} to show that the sparsity conditions in \citet{Fang:Huan:Zhao:Deng:ccla:2015} and \citet{Ban:An:Jiang:inve:2015}, which are both at the order of $O(p^2)$, do not suffice. The identifiability condition in Proposition \ref{prop:ident_exact} essentially requires the average degree of the correlation network to be less than 1, which is too restrictive to be useful in practice. This illustrates the importance and necessity of introducing the notion of approximate identifiability.

\section{A Sparse Covariance Estimator for Compositional Data}
Suppose that $(\bW_k,\bX_k)$, $k=1,\dots,n$, are independent copies of $(\bW,\bX)$, where the compositions $\bX_k=(X_{k1},\dots,X_{kp})^T$ are observed and the bases $\bW_k=(W_{k1},\dots,W_{kp})^T$ are latent. In Section 3.1, we rely on the decompositions \eqref{eq:decomp} and \eqref{eq:decomp_orth} and Proposition \ref{prop:ident} to develop an estimator of $\bOmega_0$, and in Section 3.2 discuss the selection of the tuning parameter.

\subsection{Composition-Adjusted Thresholding}
In view of Proposition \ref{prop:ident}, we wish to estimate the covariance parameter $\bOmega_0$ via the proxy $\bGamma_0$. To this end, we first construct an empirical estimate of $\bGamma_0$ and then apply adaptive thresholding to the estimate.

There are two equivalent ways to form the estimate of $\bGamma_0$. Motivated by the decomposition \eqref{eq:decomp_orth}, one can start with the sample counterpart $\what\bT=(\hat\tau_{ij})_{p\times p}$ of $\bT_0$ defined by
\[
\hat\tau_{ij}=\frac{1}{n}\sum_{k=1}^n(\tau_{kij}-\bar\tau_{ij})^2,
\]
where $\tau_{kij}=\log(X_{ki}/X_{kj})$ and $\bar\tau_{ij}=n^{-1}\sum_{k=1}^n\tau_{kij}$. A rank-2 component $\what\balpha\bone^T+\bone\what\balpha^T$ with $\what\balpha=(\hat\alpha_1,\dots,\hat\alpha_p)^T$ can be extracted from the decomposition \eqref{eq:decomp_orth} by projecting $\what\bT$ onto the subspace $\cA\equiv\{\balpha\bone^T+\bone\balpha^T\colon\balpha\in\mathbb{R}^p\}$, which is given by
\[
\hat\alpha_i=\hat\tau_{i\cdot}-\frac{1}{2}\hat\tau_{\cdot\cdot},
\]
where $\hat\tau_{i\cdot}=p^{-1}\sum_{j=1}^p\hat\tau_{ij}$ and $\hat\tau_{\cdot\cdot}=p^{-2}\sum_{i,j=1}^p\hat\tau_{ij}$. The residual matrix $\what\bGamma=-(\what\bT-\what\balpha\bone^T-\bone\what\balpha^T)/2$, with entries
\[
\hat\gamma_{ij}=-\frac{1}{2}(\hat\tau_{ij}-\hat\alpha_i-\hat\alpha_j)=-\frac{1}{2}(\hat\tau_{ij}-\hat\tau_{i\cdot}-\hat\tau_{j\cdot} +\hat\tau_{\cdot\cdot}),
\]
is then an estimate of $\bGamma_0$. Alternatively, $\what\bGamma$ can be obtained directly as the sample counterpart of $\bGamma_0$ through the expression
\begin{equation}\label{eq:gamma}
\hat\gamma_{ij}=\frac{1}{n}\sum_{k=1}^n(\gamma_{ki}-\bar\gamma_i)(\gamma_{kj}-\bar\gamma_j),
\end{equation}
where $\gamma_{kj}=\log(X_{kj}/g(\bX_k))$ and $\bar\gamma_j=n^{-1}\sum_{k=1}^n\gamma_{kj}$.

Now applying adaptive thresholding to $\what\bGamma$, we define the \emph{composition-adjusted thresholding} (COAT) estimator
\begin{equation}\label{eq:coat}
\what\bOmega=(\hat\omega_{ij})_{p\times p}\quad\text{with }\hat\omega_{ij}=S_{\lambda_{ij}}(\hat\gamma_{ij}),
\end{equation}
where $S_\lambda(\cdot)$ is a general thresholding function and $\lambda_{ij}>0$ are entry-dependent thresholds.

In this article, we consider a class of general thresholding functions $S_\lambda(\cdot)$ that satisfy the following conditions:
\begin{compactenum}[(i)]
  \item $S_\lambda(z)=0$ for $|z|\le\lambda$;
  \item $|S_\lambda(z)-z|\le\lambda$ for all $z\in\mathbb{R}$.
\end{compactenum}
These two conditions were assumed by \citet{Roth:Levi:Zhu:gene:2009} and \citet{Cai:Liu:adap:2011} along with another condition that is not required in our analysis. Examples of thresholding functions belonging to this class include the hard thresholding rule $S_\lambda(z)=zI(|z|\ge\lambda)$, the soft thresholding rule $S_\lambda(z)=\sgn(z)(|z|-\lambda)_+$, and the adaptive lasso rule $S_\lambda(z)=z(1-|\lambda/z|^\eta)_+$ for $\eta\ge1$.

The performance of the COAT estimator depends critically on the choice of thresholds. Using entry-adaptive thresholds may in general improve the performance over applying a universal threshold. To derive a data-driven choice of $\lambda_{ij}$, define
\[
\theta_{ij}=\Var\{(Y_i-\mu_i)(Y_j-\mu_j)\},
\]
where $\mu_j=EY_j$. We take $\lambda_{ij}$ to be of the form
\begin{equation}\label{eq:thresh}
\lambda_{ij}=\lambda\sqrt{\hat\theta_{ij}},
\end{equation}
where $\hat\theta_{ij}$ are estimates of $\theta_{ij}$, and $\lambda>0$ is a tuning parameter to be chosen, for example, by cross-validation. We rewrite \eqref{eq:gamma} as $\hat\gamma_{ij}=n^{-1}\sum_{k=1}^n\gamma_{kij}$, where $\gamma_{kij}=(\gamma_{ki}-\bar\gamma_i)(\gamma_{kj}-\bar\gamma_j)$. Then $\theta_{ij}$ can be estimated by
\[
\hat\theta_{ij}=\frac{1}{n}\sum_{k=1}^n(\gamma_{kij}-\hat\gamma_{ij})^2.
\]

\subsection{Tuning Parameter Selection}
The thresholds defined by \eqref{eq:thresh} depend on the tuning parameter $\lambda$, which can be chosen through $V$-fold cross-validation. Denote by $\what\bOmega^{(-v)}(\lambda)$ the COAT estimate based on the training data excluding the $v$th fold, and $\what\bGamma_v$ the residual matrix (or the sample centered log-ratio covariance matrix) based on the test data including only the $v$th fold. We choose the optimal value of $\lambda$ that minimizes the cross-validation error
\[
\CV(\lambda)=\frac{1}{V}\sum_{v=1}^V\|\what\bOmega^{(-v)}(\lambda)-\what\bGamma^{(v)}\|_F^2.
\]
With the optimal $\lambda$, we then compute the COAT estimate based on the full dataset as our final estimate. When the positive definiteness of the covariance estimate in finite samples is required for interpretation, we follow the approach of \citet{Fan:Liao:Minc:larg:2013} and choose $\lambda$ in the range where the minimum eigenvalue of the COAT estimate is positive.

\section{Theoretical Properties}
In this section, we investigate the asymptotic properties of the COAT estimator. As a distinguishing feature of our theoretical analysis, we assume neither the exact identifiability of the parameters nor that the degree of (approximate) identifiability is dominated by the statistical error. Instead, the degree of identifiability enters our analysis and shows up in the resulting rate of convergence. Such theoretical analysis is rare in the literature, but is extremely relevant for latent variable models in the presence of nonidentifiability and is of theoretical interest in its own right. We introduce our assumptions in Section 4.1, and present our main results on rates of convergence and support recovery in Section 4.2.

\subsection{Assumptions}
Recall that $Y_j=\log W_j$, $\mu_j=EY_j$, and $\theta_{ij}=\Var\{(Y_i-\mu_i)(Y_j-\mu_j)\}$, and define $Y_{kj}=\log W_{kj}$. Without loss of generality, assume $\mu_j=0$ for all $j$ throughout this section. We need to impose the following moment conditions on the log-basis $\bY=(Y_1,\dots,Y_p)^T$.

\begin{condition}\label{cond:tail}
There exists a constant $\alpha>0$ such that $\max_jE\exp(\alpha Y_j^2)\le2$.
\end{condition}

\begin{condition}\label{cond:sparse}
The basis covariance matrix $\bOmega_0$ belongs to the class $\cU(q,s_0(p),M)$ defined by \eqref{eq:class}, where $0\le q<1$, $s_0(p)=o(p)$, and $\log p=o(n^{1/5})$.
\end{condition}

\begin{condition}\label{cond:lower}
There exists a constant $\tau>0$ such that $\min_{i,j}\theta_{ij}\ge\tau$.
\end{condition}

\begin{condition}\label{cond:moment4}
There exists a sequence $s_1(p)=o(p)$ such that
\[
\max_{i,j,\ell}\left|\sum_{m=1}^pEY_iY_jY_{\ell}Y_m\right|\le s_1(p).
\]
\end{condition}

Conditions 1--3 are similar to those commonly assumed in the covariance estimation literature; see, for example, \citet{Cai:Liu:adap:2011}. Condition \ref{cond:tail} requires that the variables $Y_j$s be uniformly sub-Gaussian; the definition we use here is among several equivalent ways of defining sub-Gaussianity \citep[][sec.\ 2.3]{Bouc:Lugo:Mass:conc:2013}, and is most convenient for our technical analysis. Condition \ref{cond:sparse} imposes some restrictions on the dimensionality and sparsity of the basis covariance matrix $\bOmega_0$. It is worth mentioning that the sparsity level condition $s_0=o(p)$ is so weak that it suffices to guarantee only approximate identifiability but allows the degree of nonidentifiability to be large relative to the statistical error. Condition \ref{cond:lower} is essential for methods based on adaptive thresholding. Condition \ref{cond:moment4} arises from identifiability considerations in estimating the variances $\theta_{ij}$. In particular, if $\bY$ is multivariate normal, then Condition \ref{cond:moment4} is implied by the assumptions $\bOmega_0\in\cU(q,s_0(p),M)$ and $s_0(p)=o(p)$ in Condition \ref{cond:sparse}, since from Isserlis' theorem \citep{Isse:on:1918} we have
\[
\max_{i,j,\ell}\left|\sum_{m=1}^pEY_iY_jY_{\ell}Y_m\right|\le\max_{i,j,\ell}\sum_{m=1}^p\left(|\omega_{ij}^0||\omega_{\ell m}^0| +|\omega_{i\ell}^0||\omega_{jm}^0|+|\omega_{im}^0||\omega_{j\ell}^0|\right)\le 3M^{2-q}s_0(p).
\]

\subsection{Main Results}
We are now in a position to state our main results. The following theorem gives the rate of convergence under the spectral norm for the COAT estimator.
\begin{theorem}[Rate of convergence]\label{thm:rate}
Under Conditions \ref{cond:tail}--\ref{cond:moment4}, if the tuning parameter $\lambda$ in \eqref{eq:thresh} is chosen to be
\begin{equation}\label{eq:lambda}
\lambda=C_1\sqrt{\frac{\log p}{n}}+C_2\frac{s_0(p)}{p}
\end{equation}
for sufficiently large $C_1,C_2>0$, then the COAT estimator $\what\bOmega$ in \eqref{eq:coat} satisfies
\[
\|\what\bOmega-\bOmega_0\|_2=O_p\left\{s_0(p)\left(\sqrt{\frac{\log p}{n}}+\frac{s_0(p)}{p}\right)^{1-q}\right\}
\]
uniformly on $\cU(q,s_0(p),M)$.
\end{theorem}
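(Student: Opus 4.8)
The plan is to follow the now-standard analysis of adaptive thresholding estimators (as in Bickel--Levina and Cai--Liu), but with the crucial twist that the target $\bOmega_0$ differs from the population proxy $\bGamma_0$ of the matrix we actually threshold by the identifiability gap quantified in Proposition \ref{prop:ident}; this gap is what injects the extra $s_0(p)/p$ term into the effective noise level. The entire argument will be carried out on a single high-probability event $\mathcal{E}$ on which three quantitative bounds hold, after which the convergence rate follows from a deterministic thresholding calculation.

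First I would establish the entrywise concentration of the empirical centered log-ratio covariance. Writing $\gamma_{kj}=Y_{kj}-\bar Y_k$ with $\bar Y_k=p^{-1}\sum_m Y_{km}$, so that $\what\bGamma=(\bI-p^{-1}\bone\bone^T)\what\bS(\bI-p^{-1}\bone\bone^T)$ for the sample covariance $\what\bS$ of the $\bY_k$, I would show
\[
\max_{i,j}\frac{|\hat\gamma_{ij}-\gamma_{ij}^0|}{\sqrt{\theta_{ij}}}=O_p\Bigl(\sqrt{\tfrac{\log p}{n}}\Bigr).
\]
Under Condition \ref{cond:tail} the products $\gamma_{ki}\gamma_{kj}$ are sub-exponential, so a Bernstein inequality together with a union bound over the $O(p^2)$ entries yields this rate; the centering by $\bar\gamma$ contributes only lower-order terms because $\Var(\bar Y_k)\le p^{-1}\|\bOmega_0\|_1=o(1)$. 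Normalizing by $\sqrt{\theta_{ij}}$ is legitimate since $\Var(\gamma_{ki}\gamma_{kj})$ and $\theta_{ij}=\Var(Y_iY_j)$ differ by $O(s_1(p)/p)=o(1)$ under Condition \ref{cond:moment4}, and both are bounded above and, by Condition \ref{cond:lower}, below.

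Second, Proposition \ref{prop:ident} supplies the deterministic identifiability bound $\max_{i,j}|\gamma_{ij}^0-\omega_{ij}^0|\le 3p^{-1}\|\bOmega_0\|_1\le 3M^{1-q}s_0(p)/p$. Combining this with the first step and using $\theta_{ij}\ge\tau$ yields, on $\mathcal{E}$,
\[
\max_{i,j}\frac{|\hat\gamma_{ij}-\omega_{ij}^0|}{\sqrt{\theta_{ij}}}\le K\Bigl(\sqrt{\tfrac{\log p}{n}}+\tfrac{s_0(p)}{p}\Bigr)
\]
for a constant $K$. Third, I would prove the uniform consistency of the data-driven thresholds, namely $\max_{i,j}|\hat\theta_{ij}/\theta_{ij}-1|=o_p(1)$; this is a fourth-moment concentration for which the sub-Gaussian tails supply the requisite moments, the condition $\log p=o(n^{1/5})$ controls the rate of the union bound, and Condition \ref{cond:moment4} again reconciles the population target of $\hat\theta_{ij}$ with $\theta_{ij}$. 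On $\mathcal{E}$ we then have $\lambda_{ij}=\lambda\sqrt{\hat\theta_{ij}}\asymp\lambda\sqrt{\theta_{ij}}$, and choosing $C_1,C_2$ in \eqref{eq:lambda} large enough relative to $K$ and $\tau$ guarantees $|\hat\gamma_{ij}-\omega_{ij}^0|\le\tfrac12\lambda_{ij}$ for all $i,j$, while $\theta_{ij}$ bounded above gives $\lambda_{ij}\le C\lambda$.

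Finally, on $\mathcal{E}$ the bound follows from a purely deterministic thresholding argument. When $\omega_{ij}^0=0$, property (i) forces $\hat\omega_{ij}=0$; when $\omega_{ij}^0\ne0$, property (ii) gives $|\hat\omega_{ij}-\omega_{ij}^0|\le|\hat\omega_{ij}-\hat\gamma_{ij}|+|\hat\gamma_{ij}-\omega_{ij}^0|\le\tfrac32\lambda_{ij}\le C\lambda$. Splitting each row at the level $\lambda$ and using $\max_i\sum_j|\omega_{ij}^0|^q\le s_0(p)$ to count the entries exceeding $\lambda$ and to control the contribution of the smaller ones, I obtain $\max_i\sum_j|\hat\omega_{ij}-\omega_{ij}^0|\le C'\lambda^{1-q}s_0(p)$. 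Since $\|\what\bOmega-\bOmega_0\|_2\le\|\what\bOmega-\bOmega_0\|_1=\max_i\sum_j|\hat\omega_{ij}-\omega_{ij}^0|$ for symmetric matrices, substituting $\lambda\asymp\sqrt{\log p/n}+s_0(p)/p$ delivers the stated rate. I expect the main obstacle to be the concentration steps for $\hat\gamma_{ij}$ and especially $\hat\theta_{ij}$: because these are built from the centered log-ratios $Y_{kj}-\bar Y_k$ rather than the raw $Y_{kj}$, one must show, uniformly over all $O(p^2)$ pairs, that the $p$-dependent cross terms introduced by $\bar Y_k$ neither distort the variance normalization nor inflate the tails---precisely where Condition \ref{cond:moment4} and the growth restriction $\log p=o(n^{1/5})$ are needed. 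The closing thresholding computation, by contrast, is routine once these uniform bounds are in hand.
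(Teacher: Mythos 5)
Your proposal is correct and follows essentially the same route as the paper's proof: entrywise concentration of $\hat\gamma_{ij}$ around $\omega_{ij}^0$ with the extra $s_0(p)/p$ bias term, uniform consistency of $\hat\theta_{ij}$ (using Condition \ref{cond:moment4} and $\log p=o(n^{1/5})$), a deterministic adaptive-thresholding bound on the matrix $L_1$-norm, and finally $\|\cdot\|_2\le\|\cdot\|_1$ for symmetric matrices. The only cosmetic difference is that you route the approximation error through Proposition \ref{prop:ident} directly, whereas the paper re-derives the same $O(s_0(p)/p)$ bias inside its Lemma 2 by expanding the clr centering terms $\bar Y_k$; these are the same estimate packaged differently.
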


The rate of convergence provided by Theorem \ref{thm:rate} exhibits an interesting decomposition: the term $s_0(p)\{(\log p)/n\}^{(1-q)/2}$ represents the estimation error due to estimating $\bGamma_0$, while the term $s_0(p)(s_0(p)/p)^{1-q}$ accounts for the approximation error due to using $\bGamma_0$ as a proxy for $\bOmega_0$. In particular, if the approximation error is dominated by the estimation error, then the COAT estimator attains the minimax optimal rate under the spectral norm over $\cU(q,s_0(p),M)$ \citep{Cai:Zhou:opti:2012}. It is important to note that the dimensionality $p$ appears in both terms where it plays opposite roles. We observe a ``curse of dimensionality'' in the first term, where the growth of dimensionality contributes a logarithmic factor to the estimation error. In contrast, a ``blessing of dimensionality'' is reflected by the second term in that a diverging dimensionality shrinks the approximation error toward zero at a power rate.

The insights gained from Theorem \ref{thm:rate} have important implications for compositional data analysis. In the analysis of many compositional datasets, the dimensionality often depends on the taxonomic level to be examined. For example, in metagenomic studies, the dimensionality may range from only a few taxa at the phylum level to thousands of taxa at the operational taxonomic unit (OTU) level. Suppose, for simplicity, that the magnitudes of correlation signals are of about the same order across different taxonomic levels. Then Theorem \ref{thm:rate} indicates a tradeoff between an accurate estimation of the covariance structure with low dimensionality and a sensible interpretation in terms of the basis components with high dimensionality. This tradeoff thus suggests the need to analyze compositional data at relatively finer taxonomic levels when a latent variable interpretation is desired.

The proof of Theorem \ref{thm:rate} relies on a series of concentration inequalities that take the approximation error term into account, which can be found in the Appendix. As a consequence of these inequalities, we obtain the following result regarding the support recovery property of the COAT estimator. Here the support of $\bOmega_0$ refers to the set of all indices $(i,j)$ with $\omega_{ij}^0\ne 0$.

\begin{theorem}[Support recovery]\label{thm:supp}
Under Conditions \ref{cond:tail}--\ref{cond:moment4}, if the tuning parameter $\lambda$ in \eqref{eq:thresh} is chosen as in \eqref{eq:lambda}, then the COAT estimator $\what\bOmega$ in \eqref{eq:coat} satisfies
\begin{equation}\label{eq:sparsist}
P\left(\hat\omega_{ij}=0\text{ for all }(i,j)\text{ with }\omega_{ij}^0=0\right)\to 1.
\end{equation}
Moreover, if in addition
\begin{equation}\label{eq:min_sig}
\min_{(i,j)\colon\omega_{ij}^0\ne 0}|\omega_{ij}^0|/\sqrt{\theta_{ij}}\ge C\lambda
\end{equation}
for some constant $C>3/2$, then
\begin{equation}\label{eq:supp}
P\left(\sgn(\hat\omega_{ij})=\sgn(\omega_{ij}^0)\text{ for all }(i,j)\right)\to 1.
\end{equation}
\end{theorem}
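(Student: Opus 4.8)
The plan is to derive both claims from a single high-probability event on which the thresholded entries behave deterministically, so that support recovery reduces to elementary arithmetic with the thresholding rule. The starting point is the uniform deviation bound underlying the proof of Theorem \ref{thm:rate}: combining the concentration inequalities from the Appendix with the approximation bound $\|\bOmega_0-\bGamma_0\|_{\max}\le 3p^{-1}\|\bOmega_0\|_1\le 3M^{1-q}s_0(p)/p$ from Proposition \ref{prop:ident} and the sparsity class \eqref{eq:class}, I would show that for $\lambda$ as in \eqref{eq:lambda} with $C_1,C_2$ large enough, with probability tending to one,
\[
\max_{i,j}\frac{|\hat\gamma_{ij}-\omega_{ij}^0|}{\sqrt{\hat\theta_{ij}}}\le\frac{\lambda}{2}.
\]
The two ingredients are the statistical error $|\hat\gamma_{ij}-\gamma_{ij}^0|\le c_0\sqrt{\theta_{ij}}\sqrt{(\log p)/n}$ (uniform in $i,j$, for some constant $c_0$), controlled by the first term of $\lambda$, and the approximation error $|\gamma_{ij}^0-\omega_{ij}^0|\le 3M^{1-q}s_0(p)/p$, which after division by $\sqrt{\theta_{ij}}\ge\sqrt{\tau}$ (Condition \ref{cond:lower}) is controlled by the second term of $\lambda$; consistency of $\hat\theta_{ij}$ for $\theta_{ij}$ lets me replace $\sqrt{\theta_{ij}}$ by $\sqrt{\hat\theta_{ij}}$ up to a $1+o(1)$ factor. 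Call this event $\mathcal{G}$, with $P(\mathcal G)\to 1$. Because the bound is already uniform over $(i,j)$, no further union bound is needed and all entrywise conclusions below hold simultaneously on $\mathcal G$.

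For the no-false-positive claim \eqref{eq:sparsist}, I would argue on $\mathcal G$ that for any $(i,j)$ with $\omega_{ij}^0=0$,
\[
|\hat\gamma_{ij}|=|\hat\gamma_{ij}-\omega_{ij}^0|\le\tfrac12\lambda\sqrt{\hat\theta_{ij}}=\tfrac12\lambda_{ij}\le\lambda_{ij},
\]
so property (i) of the thresholding function gives $\hat\omega_{ij}=S_{\lambda_{ij}}(\hat\gamma_{ij})=0$. Since all these inequalities hold on the single event $\mathcal G$, intersecting over the zero entries costs nothing and yields \eqref{eq:sparsist}.

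For sign consistency \eqref{eq:supp}, it remains to treat $(i,j)$ with $\omega_{ij}^0\ne0$; take $\omega_{ij}^0>0$, the other case being symmetric. On $\mathcal G$ one has $\sqrt{\hat\theta_{ij}}\le(1+o(1))\sqrt{\theta_{ij}}$, so $\lambda_{ij}=\lambda\sqrt{\hat\theta_{ij}}\le(1+o(1))\lambda\sqrt{\theta_{ij}}$; hence the minimum-signal condition \eqref{eq:min_sig}, namely $\omega_{ij}^0\ge C\lambda\sqrt{\theta_{ij}}$, gives $\omega_{ij}^0\ge C(1+o(1))^{-1}\lambda_{ij}$, and therefore
\[
\hat\gamma_{ij}\ge\omega_{ij}^0-|\hat\gamma_{ij}-\omega_{ij}^0|\ge\Big(\frac{C}{1+o(1)}-\frac12\Big)\lambda_{ij}.
\]
Since $C>3/2$ we have $C-\tfrac12>1$, which dominates the vanishing correction for large $n$, so $\hat\gamma_{ij}>\lambda_{ij}$; property (ii) then gives $\hat\omega_{ij}\ge\hat\gamma_{ij}-\lambda_{ij}>0$, matching $\sgn(\omega_{ij}^0)$. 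Combined with the zero entries above, this gives $\sgn(\hat\omega_{ij})=\sgn(\omega_{ij}^0)$ for all $(i,j)$ on $\mathcal G$, hence \eqref{eq:supp}. This is exactly where the constant $3/2$ enters: the factor $\tfrac12$ in the deviation bound forces the signal to exceed $\tfrac32\lambda_{ij}$ rather than $\lambda_{ij}$.

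The main obstacle is not this deterministic reduction but the uniform deviation bound defining $\mathcal G$, which is the substance of the Appendix analysis for Theorem \ref{thm:rate}. Two features make it delicate: the centered log-ratio terms $\gamma_{kij}=(\gamma_{ki}-\bar\gamma_i)(\gamma_{kj}-\bar\gamma_j)$ couple all coordinates through the geometric mean $g(\bX_k)$, so the sub-Gaussian tail of Condition \ref{cond:tail} must be propagated through this centering; and the adaptive variance estimate $\hat\theta_{ij}$ requires uniform control of fourth-order moments, which is precisely the role of Condition \ref{cond:moment4}. Securing the factor $\tfrac12$, so that the clean threshold $C>3/2$ emerges, requires taking $C_1,C_2$ a fixed multiple larger than the constants appearing in the concentration and approximation bounds.
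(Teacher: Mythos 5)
Your proof is correct and follows essentially the same route as the paper's: both rest on the uniform concentration results of the Appendix (the bound on $\max_{i,j}|\hat\gamma_{ij}-\omega_{ij}^0|/\sqrt{\hat\theta_{ij}}$ together with uniform consistency of $\hat\theta_{ij}$), then invoke property (i) of the thresholding function for the no-false-positive claim and property (ii) plus the minimum-signal condition \eqref{eq:min_sig} for sign recovery. The only difference is bookkeeping of constants: you absorb the statistical-plus-approximation error into $\lambda/2$ (by taking $C_1,C_2$ twice the concentration constants) and handle the $\hat\theta_{ij}$ error multiplicatively as a $1+o(1)$ factor, whereas the paper keeps the deviation at the full $\lambda$ level and bounds $|\sqrt{\hat\theta_{ij}}-\sqrt{\theta_{ij}}|\le\sqrt{\tau}/2$ additively via $\ve=3\tau/4$ in \eqref{eq:conc_theta}; both allocations produce the same threshold constant $C>3/2$.
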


Theorem \ref{thm:supp} parallels the support recovery results in \citet{Roth:Levi:Zhu:gene:2009} and \citet{Cai:Liu:adap:2011}. However, owing to the extra term $s_0(p)/p$ in the expression of $\lambda$, the assumption \eqref{eq:min_sig} requires in addition that no correlation signals fall below the approximation error. In other words, exact support recovery will break down if any correlation signal is confounded by the compositional effect.

\section{Simulation Studies}
We conducted simulation studies to compare the numerical performance of the COAT estimator $\what\bOmega$ with that of the oracle thresholding estimator $\what\bOmega_o$, which knew the latent basis components and applied the thresholding procedure to the sample covariance matrix of the log-basis $\bY$. We also include in our comparison two naive thresholding estimators $\what\bOmega_c$ and $\what\bOmega_l$, which are based on the sample covariance matrices of the composition $\bX$ and its logarithm $\log\bX$, respectively. Note that $\what\bOmega_o$ is the ideal estimator that the COAT estimator attempts to mimic, whereas both $\what\bOmega_c$ and $\what\bOmega_l$ ignore the unique features of compositional data and thus are expected to perform poorly.

\subsection{Simulation Settings}
The data $(\bW_k,\bX_k)$, $k=1,\dots,n$, were generated as follows. We first generated $\bY_k$ in two different ways:
\begin{compactenum}[(i)]
  \item $\bY_k$ are independent from the multivariate normal distribution $N_p(\bmu,\bOmega_0)$;
  \item $\bY_k=\bmu+\bF\bU_k/\sqrt{10}$, where $\bF\bF^T=\bOmega_0$ and the components of $\bU_k$ are independent gamma variables with shape parameter 10 and scale parameter 1, so that $\Var(\bY_k)=\bOmega_0$. Here the matrix $\bF$ is obtained by computing the singular value decomposition $\bOmega_0=\bQ\bS\bQ^T$ and letting $\bF=\bQ\bS^{1/2}$.
\end{compactenum}
Then $\bW_k=(W_{k1},\dots,W_{kp})^T$ and $\bX_k=(X_{k1},\dots,X_{kp})^T$ were obtained through the transformations $W_{kj}=e^{Y_{kj}}$ and $X_{kj}=W_{kj}/\sum_{i=1}^pW_{ki}$, $j=1,\dots,p$. Hence, in Case (i), $\bW_k$ and $\bX_k$ follow multivariate log-normal and logistic normal distributions \citep{Aitc:Shen:logi:1980}, respectively; the distributions of $\bW_k$ and $\bX_k$ in Case (ii) can similarly be viewed as a type of multivariate log-gamma and logistic-gamma distributions.

In both cases, we took the components of $\bmu$ randomly from the uniform distribution on $[0,10]$, in order to reflect the fact that compositional data arising from metagenomic studies are often heterogeneous. The following two models for the covariance matrix $\bOmega_0$ were considered:

\begin{compactitem}
  \item Model 1 (Identity covariance): $\bOmega_0=\bI_p$.
  \item Model 2 (Sparse covariance): $\bOmega_0=\diag(\bA_1,\bA_2)$, where $\bA_1=\bB+\ve\bI_{p_1}$, $\bA_2=4\bI_{p_2}$, $p_1=\lfloor 2\sqrt{p}\rfloor$, $p_2=p-p_1$, and $\bB$ is a symmetric matrix whose lower triangular entries are independent from the uniform distribution on $[-1,-0.5]\cup[0.5,1]$ with probability 0.2 and equal to 0 with probability 0.8. We set $\ve=\max(-\lambda_{\min}(\bB),0)+0.01$ to ensure that $\bA_1$ is positive definite, where $\lambda_{\min}(\cdot)$ denotes the smallest eigenvalue.
\end{compactitem}

Model 1 is an extreme but illustrative case intended for comparing the distributions of spurious correlations under different transformations. The setting of Model 2 is typical in the covariance estimation literature and similar to that in \citet{Cai:Liu:adap:2011}. We set the sample size $n=100$ and the dimension $p=50$, 100, and 200, and repeated 100 simulations for each setting.

\subsection{Spurious Correlations}
The boxplots of sample correlations with simulated data under different transformations in Model 1 are shown in Figure \ref{fig:boxplot}. Clearly, the sample centered log-ratio (clr) correlations are centered around zero and have a similar distribution to that of the sample correlations of $\bY$; the resemblance tends to increase as the dimension $p$ grows. This trend is consistent with Proposition \ref{prop:ident} and provides numerical evidence for the validity of the centered log-ratio covariance matrix $\bGamma_0$ as a proxy for $\bOmega_0$. In fact, from the proof of Proposition \ref{prop:ident} we have, when $\bOmega_0=\bI_p$,
\[
\|\bOmega_0-\bGamma_0\|_{\max}=\max_{i,j}|\omega_{i\cdot}^0+\omega_{j\cdot}^0-\omega_{\cdot\cdot}^0|=p^{-1}.
\]
In contrast, the phenomenon of spurious correlations is observed on both $\log\bX$ and $\bX$. The sample correlations of $\log\bX$ exhibit a severe upward bias, while the sample correlations of $\bX$ contain many outliers that would be detected as signals by a thresholding procedure with threshold level close to 1. Moreover, the spurious correlations seem to become worse with gamma-related distributions where the components of the composition have more heterogeneous means.

\begin{figure}
\includegraphics[width=\textwidth]{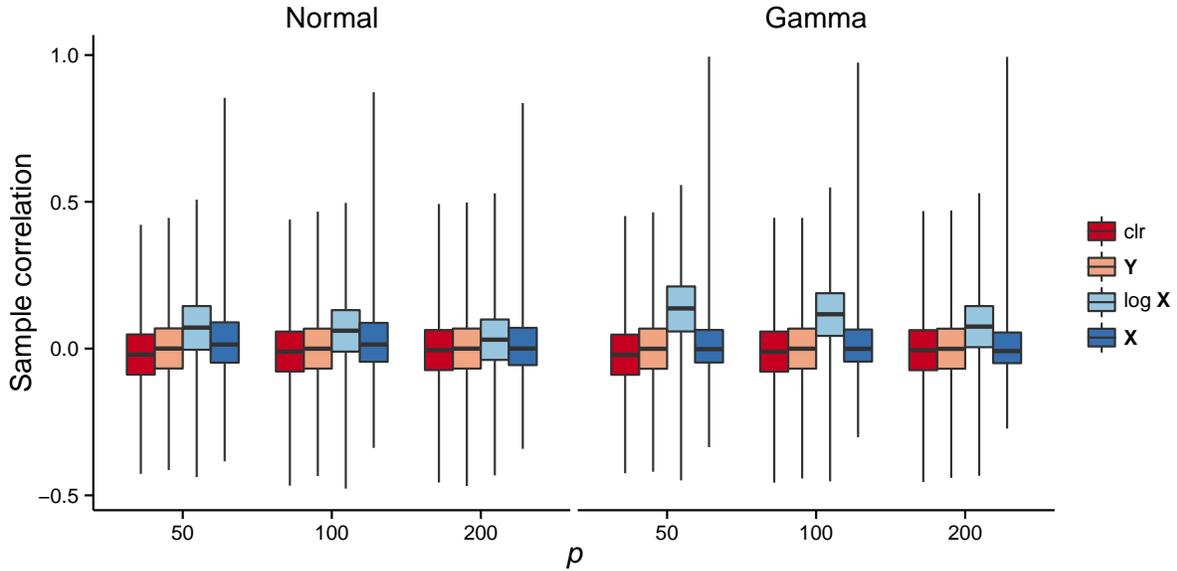}
\caption{Boxplots of sample correlations with simulated data under different transformations in Model 1.}\label{fig:boxplot}
\end{figure}

\subsection{Performance Comparisons}
We applied the COAT method with hard and soft thresholding rules to simulated data in Model 2. For comparison, we also applied the thresholding procedure to the sample covariance matrices of $\bY$, $\log\bX$, and $\bX$, resulting in the estimators $\what\bOmega_o$, $\what\bOmega_l$, and $\what\bOmega_c$, respectively. The tuning parameter $\lambda$ in each thresholding estimator was chosen by tenfold cross-validation. Losses under the matrix $L_1$-norm, spectral norm, and Frobenius norm were used to measure the estimation performance, while the true positive rate and false positive rate were employed to assess the quality of support recovery.

The simulation results for Model 2 with normal- and gamma-related distributions are summarized in Tables \ref{table:normal} and \ref{table:gamma}, respectively. We see that the COAT estimator $\what\bOmega$ performs almost equally well as the ideal estimator $\what\bOmega_o$, and outperforms the naive thresholding estimators $\what\bOmega_l$ and $\what\bOmega_c$ by a large margin. In particular, the estimation losses of $\what\bOmega_l$ are disastrously large in the gamma setting, in agreement with the severe bias observed in Figure \ref{fig:boxplot}. The estimation losses of $\what\bOmega_c$ do not change much across different thresholding rules and distributions, since all entries of the estimate are very small relative to the true values. Both $\what\bOmega_l$ and $\what\bOmega_c$ show inferior performance in terms of true and false positive rates, indicating that they are not model selection consistent. Comparisons between hard and soft thresholding rules suggest that the former is more conservative in selecting false positives and results in a more parsimonious model, whereas the latter strikes a balance between true and false positives due to the shrinkage effect.

\begin{sidewaystable}
\def~{\phantom{0}}
\caption{Means (standard errors) of various performance measures for four methods with hard and soft thresholding rules in Model 2 with normal-related distributions over 100 replications}\label{table:normal}
\begin{tabular*}{\textwidth}{@{}l*{8}{@{\extracolsep{\fill}}c}@{}}
\toprule\toprule
& \multicolumn{4}{c}{Hard} & \multicolumn{4}{c@{}}{Soft}\\
\cmidrule{2-5}\cmidrule{6-9}
$p$ & $\what\bOmega$ & $\what\bOmega_o$ & $\what\bOmega_l$ & $\what\bOmega_c$ & $\what\bOmega$ & $\what\bOmega_o$ & $\what\bOmega_l$ & $\what\bOmega_c$\\ \midrule\addlinespace
\multicolumn{9}{@{}c@{}}{Matrix $L_1$-norm loss}\\
~50 & ~4.09 (0.05) & 4.02 (0.05) & 11.72 (1.51) & ~6.91 (0.00) & ~4.34 (0.05) & ~4.10 (0.05) & 18.73 (0.64) & ~6.91 (0.00)\\
100 & ~5.46 (0.04) & 5.50 (0.05) & ~7.85 (1.13) & ~8.07 (0.00) & ~5.50 (0.05) & ~5.40 (0.05) & 27.10 (1.18) & ~8.07 (0.00)\\
200 & ~8.07 (0.04) & 8.10 (0.04) & ~8.36 (0.04) & 10.93 (0.00) & ~7.72 (0.06) & ~7.66 (0.06) & 22.61 (1.13) & 10.93 (0.00)\\\addlinespace
\multicolumn{9}{@{}c@{}}{Spectral norm loss}\\
~50 & ~2.32 (0.02) & 2.22 (0.02) & ~7.23 (0.99) & ~4.91 (0.00) & ~2.49 (0.02) & ~2.40 (0.02) & 10.23 (0.42) & ~4.92 (0.00)\\
100 & ~2.89 (0.02) & 2.90 (0.02) & ~4.50 (0.74) & ~5.46 (0.00) & ~3.01 (0.02) & ~2.98 (0.02) & 13.93 (0.70) & ~5.46 (0.00)\\
200 & ~3.55 (0.02) & 3.55 (0.02) & ~3.68 (0.02) & ~6.43 (0.00) & ~3.93 (0.02) & ~3.89 (0.02) & ~9.28 (0.60) & ~6.43 (0.00)\\\addlinespace
\multicolumn{9}{@{}c@{}}{Frobenius norm loss}\\
~50 & ~5.63 (0.03) & 5.50 (0.03) & 11.47 (1.01) & 26.00 (0.00) & ~8.37 (0.03) & ~7.99 (0.03) & 15.18 (0.39) & 26.01 (0.00)\\
100 & ~8.70 (0.04) & 8.66 (0.03) & 11.39 (0.81) & 38.39 (0.00) & 13.11 (0.04) & 12.87 (0.04) & 24.18 (0.70) & 38.39 (0.00)\\
200 & 12.03 (0.03) & 12.05(0.03) & 12.97 (0.05) & 55.78 (0.00) & 20.48 (0.03) & 20.32 (0.03) & 27.06 (0.68) & 55.78 (0.00)\\\addlinespace
\multicolumn{9}{@{}c@{}}{True positive rate}\\
~50 & ~0.65 (0.01) & 0.67 (0.01) & ~0.70 (0.01) & ~0.76 (0.02) & ~0.94 (0.00) & ~0.95 (0.00) & ~0.93 (0.00) & ~0.94 (0.00)\\
100 & ~0.59 (0.00) & 0.59 (0.00) & ~0.59 (0.01) & ~0.46 (0.02) & ~0.91 (0.00) & ~0.91 (0.00) & ~0.87 (0.00) & ~0.92 (0.00)\\
200 & ~0.60 (0.00) & 0.60 (0.00) & ~0.60 (0.00) & ~0.36 (0.02) & ~0.83 (0.00) & ~0.84 (0.00) & ~0.87 (0.00) & ~0.89 (0.00)\\\addlinespace
\multicolumn{9}{@{}c@{}}{False positive rate}\\
~50 & ~0.00 (0.00) & 0.00 (0.00) & ~0.15 (0.03) & ~0.44 (0.03) & ~0.11 (0.00) & ~0.09 (0.00) & ~0.53 (0.01) & ~0.61 (0.01)\\
100 & ~0.00 (0.00) & 0.00 (0.00) & ~0.02 (0.01) & ~0.11 (0.01) & ~0.06 (0.00) & ~0.06 (0.00) & ~0.41 (0.01) & ~0.59 (0.01)\\
200 & ~0.00 (0.00) & 0.00 (0.00) & ~0.00 (0.00) & ~0.07 (0.01) & ~0.03 (0.00) & ~0.03 (0.00) & ~0.18 (0.01) & ~0.54 (0.01)\\
\bottomrule
\end{tabular*}
\end{sidewaystable}

\begin{sidewaystable}
\def~{\phantom{0}}
\caption{Means (standard errors) of various performance measures for four methods with hard and soft thresholding rules in Model 2 with gamma-related distributions over 100 replications}\label{table:gamma}
\begin{tabular*}{\textwidth}{@{}l*{8}{@{\extracolsep{\fill}}c}@{}}
\toprule\toprule
& \multicolumn{4}{c}{Hard} & \multicolumn{4}{c@{}}{Soft}\\
\cmidrule{2-5}\cmidrule{6-9}
$p$ & $\what\bOmega$ & $\what\bOmega_o$ & $\what\bOmega_l$ & $\what\bOmega_c$ & $\what\bOmega$ & $\what\bOmega_o$ & $\what\bOmega_l$ & $\what\bOmega_c$\\ \midrule\addlinespace
\multicolumn{9}{@{}c@{}}{Matrix $L_1$-norm loss}\\
~50 & ~4.15 (0.07) & ~4.09 (0.06) & ~92.60 (1.85)~ & ~6.91 (0.00) & ~4.34 (0.06) & ~4.11 (0.06) & ~72.77 (1.45) & ~6.91 (0.00)\\
100 & ~5.45 (0.04) & ~5.44 (0.04) & 159.43 (4.91)~ & ~8.07 (0.00) & ~5.68 (0.05) & ~5.58 (0.05) & 124.90 (3.18) & ~8.07 (0.00)\\
200 & ~8.09 (0.05) & ~7.99 (0.05) & 256.12 (11.01) & 10.93 (0.00) & ~7.98 (0.07) & ~7.95 (0.07) & 200.10 (5.37) & 10.93 (0.00)\\\addlinespace
\multicolumn{9}{@{}c@{}}{Spectral norm loss}\\
~50 & ~2.50 (0.05) & ~2.38 (0.05) & ~68.27 (1.51)~ & ~4.92 (0.00) & ~2.53 (0.02) & ~2.43 (0.02) & ~51.83 (1.17) & ~4.92 (0.00)\\
100 & ~3.25 (0.05) & ~3.19 (0.05) & 111.79 (3.66)~ & ~5.46 (0.00) & ~3.07 (0.02) & ~3.03 (0.02) & ~83.24 (2.42) & ~5.46 (0.00)\\
200 & ~3.86 (0.03) & ~3.87 (0.02) & 170.37 (7.79)~ & ~6.43 (0.00) & ~3.94 (0.02) & ~3.91 (0.02) & 122.81 (4.05) & ~6.43 (0.00)\\\addlinespace
\multicolumn{9}{@{}c@{}}{Frobenius norm loss}\\
~50 & ~6.17 (0.06) & ~5.96 (0.06) & ~70.52 (1.46)~ & 25.98 (0.00) & ~8.82 (0.03) & ~8.45 (0.04) & ~54.44 (1.12) & 25.99 (0.00)\\
100 & ~9.40 (0.06) & ~9.32 (0.06) & 117.87 (3.51)~ & 38.38 (0.00) & 13.92 (0.03) & 13.67 (0.04) & ~90.22 (2.30) & 38.38 (0.00)\\
200 & 13.55 (0.08) & 13.54 (0.09) & 185.38 (7.65)~ & 55.78 (0.00) & 21.64 (0.04) & 21.45 (0.04) & 140.56 (3.83) & 55.78 (0.00)\\\addlinespace
\multicolumn{9}{@{}c@{}}{True positive rate}\\
~50 & ~0.65 (0.01) & ~0.68 (0.01) & ~~0.99 (0.00)~ & ~0.76 (0.02) & ~0.94 (0.01) & ~0.95 (0.00) & ~~0.99 (0.00) & ~0.93 (0.00)\\
100 & ~0.60 (0.00) & ~0.61 (0.00) & ~~0.97 (0.01)~ & ~0.39 (0.02) & ~0.91 (0.00) & ~0.92 (0.00) & ~~0.93 (0.01) & ~0.89 (0.01)\\
200 & ~0.60 (0.00) & ~0.61 (0.00) & ~~0.94 (0.01)~ & ~0.28 (0.02) & ~0.84 (0.00) & ~0.84 (0.00) & ~~0.93 (0.00) & ~0.88 (0.01)\\\addlinespace
\multicolumn{9}{@{}c@{}}{False positive rate}\\
~50 & ~0.00 (0.00) & ~0.00 (0.00) & ~~0.98 (0.01)~ & ~0.48 (0.03) & ~0.12 (0.00) & ~0.11 (0.00) & ~~0.95 (0.00) & ~0.72 (0.01)\\
100 & ~0.00 (0.00) & ~0.00 (0.00) & ~~0.94 (0.02)~ & ~0.10 (0.01) & ~0.07 (0.00) & ~0.07 (0.00) & ~~0.92 (0.01) & ~0.65 (0.01)\\
200 & ~0.00 (0.00) & ~0.00 (0.00) & ~~0.86 (0.03)~ & ~0.06 (0.01) & ~0.04 (0.00) & ~0.04 (0.00) & ~~0.86 (0.02) & ~0.61 (0.01)\\
\bottomrule
\end{tabular*}
\end{sidewaystable}

To further compare the support recovery performance without selecting a threshold level, we plot the receiver operating characteristic (ROC) curves for all methods in Figure \ref{fig:roc}. Note that hard and soft thresholding rules lead to the same ROC curve for each method. We observe that the ROC curves for $\what\bOmega$ and $\what\bOmega_o$ are almost indistinguishable and uniformly dominate those for $\what\bOmega_l$ and $\what\bOmega_c$, demonstrating the superiority of the COAT method. Of the two naive thresholding estimators, $\what\bOmega_l$ tends to outperform $\what\bOmega_c$ when the threshold level is high, since the former is less influenced by the high spurious correlations as reflected in Figure \ref{fig:boxplot}.

\begin{figure}
\includegraphics[width=\textwidth]{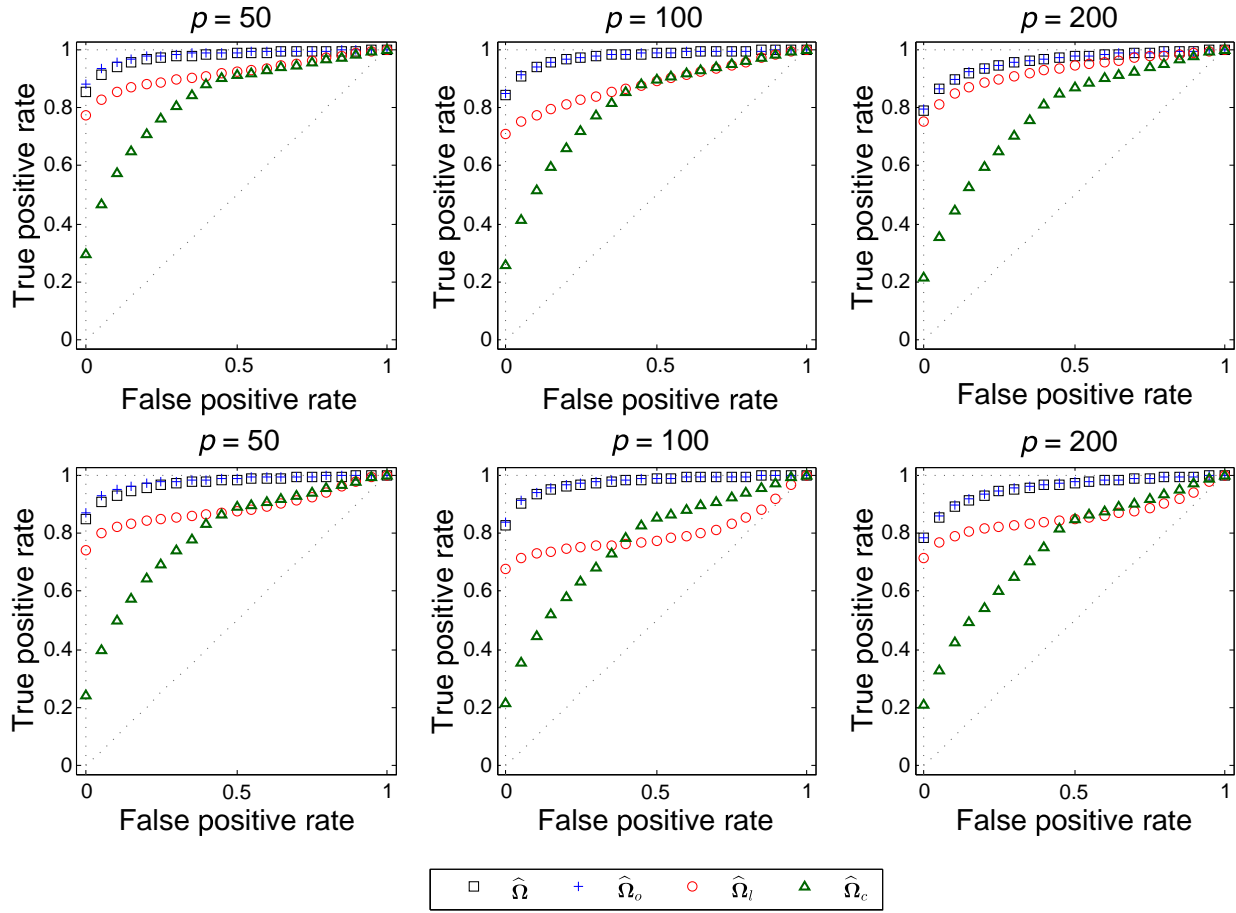}
\caption{ROC curves for four methods in Model 2 with normal-related distribution (top panel) and gamma-related distribution (bottom panel).}\label{fig:roc}
\end{figure}

\section{Gut Microbiome Data Analysis}
The gut microbiome plays a critical role in energy extraction from the diet and interacts with the immune system to exert a profound influence on human health and disease. Despite an emerging interest in characterizing the ecology of human-associated microbial communities, the complex interactions among microbial taxa remain poorly understood \citep{Coyt:Schl:Fost:ecol:2015}. We now illustrate the proposed method by applying it to a human gut microbiome dataset described by \citet{Wu:Chen:Hoff:Bitt:Chen:Keil:link:2011}, which was collected from a cross-sectional study of 98 healthy individuals at the University of Pennsylvania. DNA from stool samples of these subjects were analyzed by 454/Roche pyrosequencing of 16S rRNA gene segments, resulting in an average of 9265 reads per sample, with a standard deviation of 3864. Taxonomic assignment yielded 3068 operational taxonomic units, which were further combined into 87 genera that appeared in at least one sample. Demographic information, including body mass index (BMI), was also collected from the subjects. We are interested in identifying and comparing the correlation structures among bacterial genera between lean and obese subjects. We therefore divided the dataset into a lean group ($\mathrm{BMI}<25$, $n=63$) and an obese group ($\mathrm{BMI}\ge25$, $n=35$), and focused on the $p=40$ bacterial genera that appeared in at least four samples in each group. The count data were transformed into compositions after zero counts were replaced by 0.5.

We applied the COAT method with the soft thresholding rule to each group, and used tenfold cross-validation to select the tuning parameter. The resulting estimate was represented by a correlation network among the bacterial genera with each edge representing a nonzero correlation. To assess the stability of support recovery, we further generated 100 bootstrap samples for each group and repeated the thresholding procedure on each sample. The stability of the correlation network was measured by the average proportion of edges reproduced by each bootstrap replicate. Finally, we retained only the edges in the correlation network that were reproduced in at least 80 bootstrap replicates. The numbers of positive and negative correlations and the stability of correlation networks are reported in Table \ref{table:combo}; the results for the two naive thresholding estimators $\what\bOmega_l$ and $\what\bOmega_c$ are also included for comparison. We see that the COAT method achieves the highest stability among the three methods and has the most edges passing the stability test. The correlation network identified by $\what\bOmega_l$ has substantially fewer negative correlations than the other two methods, which is likely due to the severe upward bias observed in Figure \ref{fig:boxplot}. The correlation network identified by $\what\bOmega_c$ is the least stable.

\begin{table}
\def~{\phantom{0}}
\caption{Numbers of positive and negative correlations and stability of correlation networks for three methods applied to the gut microbiome data}\label{table:combo}
\begin{tabular*}{\textwidth}{@{}l*{6}{@{\extracolsep{\fill}}c}@{}}
\toprule\toprule
& \multicolumn{3}{c}{Lean} & \multicolumn{3}{c@{}}{Obese}\\
\cmidrule{2-4}\cmidrule{5-7}
& $\what\bOmega$ & $\what\bOmega_l$ & $\what\bOmega_c$ & $\what\bOmega$ & $\what\bOmega_l$ & $\what\bOmega_c$\\
\midrule\addlinespace
Positive correlations &  111 &  108 &  119 &   41 &   34 &   31\\
Negative correlations &  134 &   55 &   95 &   55 &   11 &   43\\
Network stability     & 0.83 & 0.68 & 0.67 & 0.87 & 0.62 & 0.54\\
\bottomrule
\end{tabular*}
\end{table}

The correlation networks identified by the COAT method for the two groups are displayed in Figure \ref{fig:net}. Clearly, the networks for the lean and obese groups show markedly different architecture, indicating that the obese microbiome is less modular with less complex interactions between the modules. This phenomenon has been demonstrated by previous studies and is possibly due to adaptation of the microbiome to low-diversity environments \citep{Gree:Turn:Bore:meta:2012}. Table \ref{table:combo} and Figure \ref{fig:net} also suggest that the gut microbial network tends to contain more competitive (negative) interactions than cooperative (positive) ones, which seems consistent with the recent finding that the ecological stability of the gut microbiome can be attributed to the benefits from limiting positive feedbacks and dampening cooperative networks \citep{Coyt:Schl:Fost:ecol:2015}.

\begin{figure}\centering
  \begin{subfigure}{.9\textwidth}
    \includegraphics[width=\textwidth]{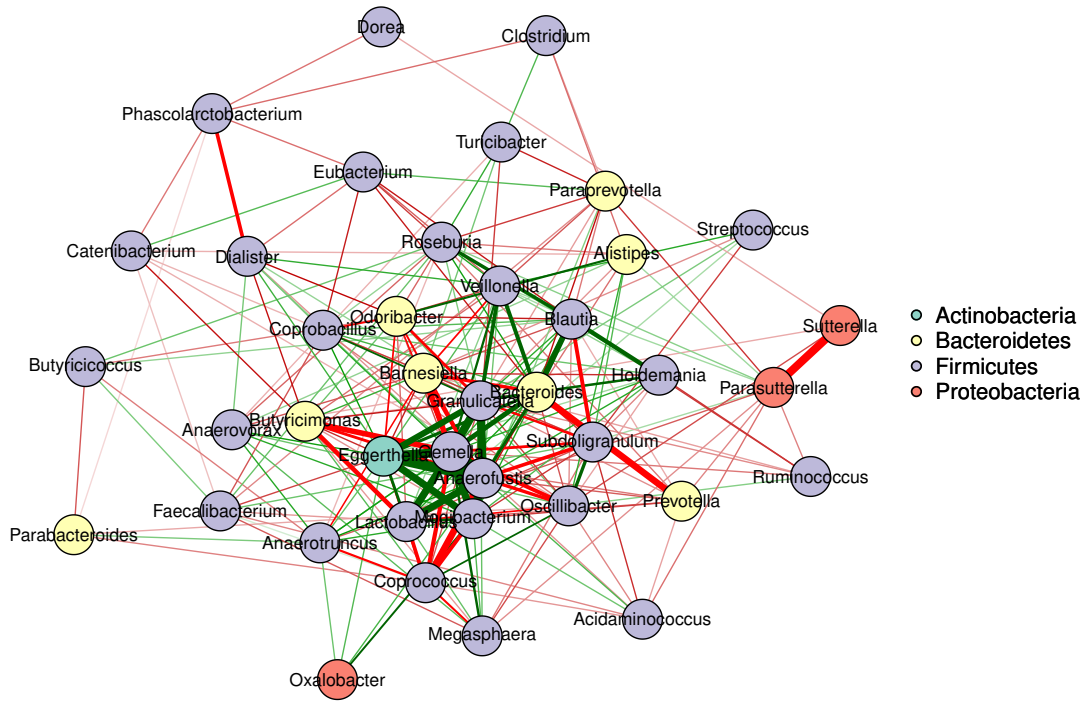}
    \caption{Lean}
  \end{subfigure}\\
  \begin{subfigure}{.9\textwidth}
    \includegraphics[width=\textwidth]{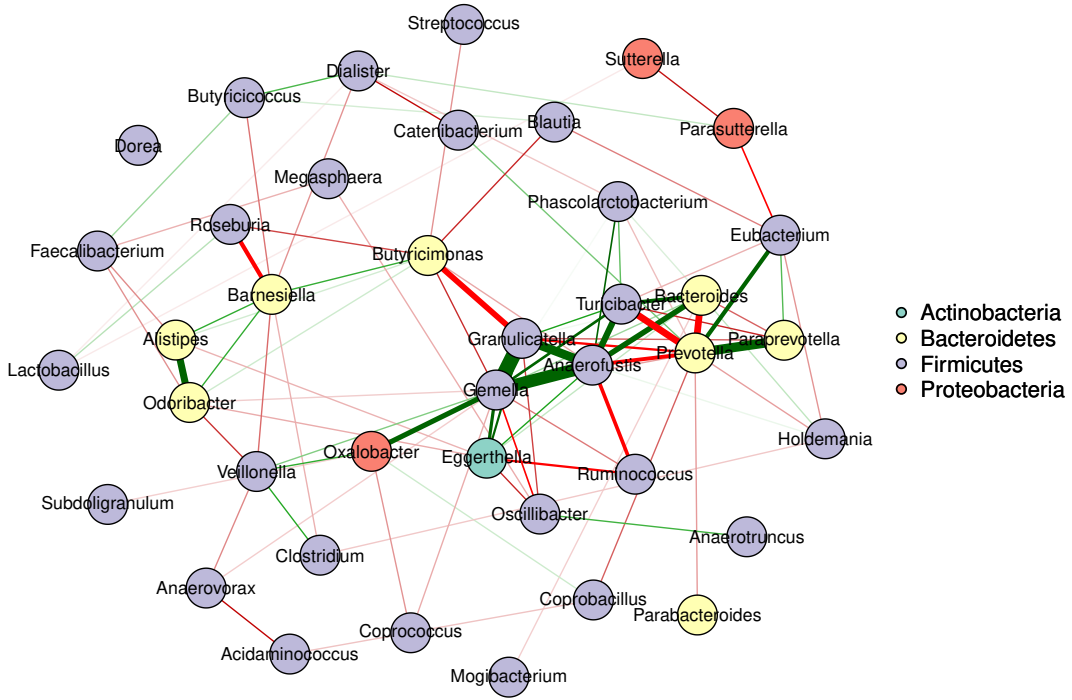}
    \caption{Obese}
  \end{subfigure}
  \caption{Correlation networks identified by the COAT method for the lean and obese groups in the gut microbiome data. Positive and negative correlations are displayed in green and red, respectively. The thickness of edges indicates the magnitude of correlations.}\label{fig:net}
\end{figure}

A closer inspection of the correlation networks identifies \emph{Bacteroides} and \emph{Prevotella} as two key genera of the gut microbiome. The abundances of these two genera are well known to distinguish two gut microbial enterotypes, which are strongly associated with long-term dietary patterns \citep{Arum:Raes:Ehrl:Bork:ente:2011,Wu:Chen:Hoff:Bitt:Chen:Keil:link:2011}. The negative correlations between \emph{Bacteroides} and \emph{Prevotella} ($-0.404$ in the lean group and $-0.296$ in the obese group) are well explained by the diet-dependent enterotypes and the within-body separation of the two genera \citep{jord:Laur:Mori:Pria:dive:2015}. Moreover, recent studies have suggested several keystone species belonging to the genus \emph{Bacteroides}, through which the structure of gut microbial communities may be influenced by small perturbations \citep{Fish:Meht:iden:2014}. Also, the Firmicutes-enriched microbiome has been found to hold greater metabolic potential than the Bacteroidetes-enriched microbiome for more efficient energy harvest from the diet \citep{Turn:Ley:Mard:Gord:an:2006}. Figure \ref{fig:net} seems to support these findings, in view of the central position of \emph{Bacteroides} in the networks and its strong correlations with a few genera belonging to the Firmicutes. Such patterns, however, are less clearly seen in the correlation networks identified by the other two methods.

\section{Discussion}
Understanding the dependence structure among microbial taxa within a community, including co-occurrence and co-exclusion relationships between microbial taxa, is an important problem in microbiome research. Such structures provide biological insights into the community dynamics and factors that change the community structures. To overcome the difficulties arising from the unit-sum constraint of the observed compositional data, we have developed a COAT method to estimate the sparse covariance matrix of the latent log-basis components. Our method is based on a decomposition of the variation matrix into a rank-2 component and a sparse component. The resulting procedure is equivalent to thresholding the sample centered log-ratio covariance matrix, and thus is optimization-free and scalable for high-dimensional data.

Our simulation results demonstrate that the COAT method performs almost as well as the oracle thresholding estimator that knew the latent basis components, and outperforms some naive thresholding estimators by a large margin. These improvements are more pronounced when the basis components have a skewed distribution, as is often observed in microbiome studies. In the application to gut microbiome data, the COAT method leads to more stable and biologically more interpretable results for comparing the dependence structures of lean and obese microbiomes.

We have provided conditions for the approximate and exact identifiability of the covariance parameters, and have established rates of convergence and support recovery guarantees for the COAT estimator. The rate of convergence includes an extra term of $O_p(s_0(p)(s_0(p)/p)^{1-q})$ in addition to the usual minimax optimal rate of convergence for sparse covariance estimation. The extra term represents an approximation error due to using $\bGamma_0$ as a proxy for $\bOmega_0$, which vanishes under mild assumptions as the dimensionality increases.

The proposed methodology may be extended in several ways. First, it would be possible to develop a joint optimization procedure based on the decomposition \eqref{eq:decomp}. For example, one may consider the regularized estimator
\[
\what\bOmega_{\mathrm{reg}}=\argmin_{\bOmega}\{\|\what\bT-\bomega\bone^T-\bone\bomega^T+2\bOmega\|_F^2+P_{\lambda}(\bOmega)\},
\]
where $\bomega=\diag(\bOmega)$ and $P_{\lambda}(\cdot)$ is a sparsity-inducing penalty function. The COAT estimator can be viewed as a one-step approximation to $\what\bOmega_{\mathrm{reg}}$ with appropriately chosen penalty function and initial value $\what\bOmega=\bzero$. Solving the full optimization problem is computationally more expensive but is expected to improve on the performance of the COAT estimator. Another worthwhile extension would be to deal with zero counts directly. One may, in principle, combine the ideas presented here with models that account for sampling and structural zeros. The issues of identifiability and computational feasibility are the major concerns with such extensions.

\appendix
\numberwithin{equation}{section}

\section*{Appendix: Proofs}
\addtocounter{section}{1}
\setcounter{equation}{0}

\subsection{Proof of Proposition \ref{prop:ident}}
Using the fact that the centered log-ratio covariance matrix $\bGamma_0$ is symmetric and has all zero row sums \citep[Property 4.6]{Aitc:stat:2003}, we have
\[
\tr\{(\bgamma_0\bone^T+\bone\bgamma_0^T)^T\bGamma_0\}=\tr(\bgamma_0^T\bGamma_0\bone)+\tr(\bgamma_0\bone^T\bGamma_0)=0,
\]
that is, the components $\bgamma_0\bone^T+\bone\bgamma_0^T$ and $\bGamma_0$ are orthogonal to each other.

To show the desired inequality, by the identity (4.35) of \citet{Aitc:stat:2003}, we have
\[
\omega_{ij}^0-\gamma_{ij}^0=\omega_{ij}^0-(\omega_{ij}^0-\omega_{i\cdot}^0-\omega_{j\cdot}^0+\omega_{\cdot\cdot}^0) =\omega_{i\cdot}^0+\omega_{j\cdot}^0-\omega_{\cdot\cdot}^0.
\]
Therefore,
\[
\|\bOmega_0-\bGamma_0\|_{\max}\le\max_{i,j}(|\omega_{i\cdot}^0|+|\omega_{j\cdot}^0|+|\omega_{\cdot\cdot}^0|)\le3p^{-1}\|\bOmega_0\|_1.
\]

\subsection{Proof of Proposition \ref{prop:ident_exact}}
We first claim that if $\balpha=(\alpha_1,\dots,\alpha_p)^T\ne\bzero$, then the matrix $\bA\equiv\balpha\bone^T+\bone\balpha^T$ has at least $p-1$ nonzero upper-triangular entries. To prove this, without loss of generality, assume $\alpha_1\ne0$ and that the last $q$ entries of the first row of $\bA$ are zero, where $0\le q\le p-1$; that is, $\alpha_1+\alpha_j\ne0$ for $1\le j\le p-q$, and $\alpha_1+\alpha_{p-q+1}=\cdots=\alpha_1+\alpha_p=0$. The latter implies $\alpha_{p-q+1}=\cdots=\alpha_p=-\alpha_1\ne0$, which gives rise to $\binom{q}{2}=q(q-1)/2$ nonzero entries at positions $(i,j)$ with $p-q+1\le i<j\le p$. Putting these pieces together, we obtain that the number of nonzero upper-triangular entries in $\bA$ is at least
\[
f(q)\equiv p-q-1+\frac{q(q-1)}{2}\ge f(1)=f(2)=p-2.
\]
To show that the lower bound $p-2$ is not attainable, note that if there are only $p-2$ nonzero upper-triangular entries, then $q=1$ or 2, and we have $\alpha_2+\alpha_p=\cdots=\alpha_{p-2}+\alpha_p=0$, which implies $\alpha_2=\cdots=\alpha_{p-2}=-\alpha_p=\alpha_1\ne0$. Since $p\ge5$, this gives rise to at least one nonzero entry at positions $(i,j)$ with $2\le i<j\le p-2$, which is a contradiction.

Now suppose $s_e(p)<(p-1)/2$ and that $\bOmega_1$ and $\bOmega_2$ in $\mathcal{B}_0(s_e(p))$ lead to $\bT_1=\bT_2$, that is,
\[
(\bomega_1-\bomega_2)\bone^T+\bone(\bomega_1-\bomega_2)^T=2(\bOmega_1-\bOmega_2).
\]
Note that the right-hand side has fewer than $p-1$ nonzero upper-triangular entries. Then it follows from the above claim that $\bOmega_1=\bOmega_2$.

We prove the other direction by showing that, if $s_e(p)\ge(p-1)/2$, then there exist $\bOmega_1$ and $\bOmega_2$ in $\mathcal{B}_0(s_e(p))$ with $\bOmega_1\ne\bOmega_2$ that lead to $\bT_1=\bT_2$. Indeed, let
\[
\bOmega_1=\begin{pmatrix}
          1+c & c\bone_{p_1}^T &  \bzero_{p_2}^T\\
 c\bone_{p_1} &            \bI &          \bzero\\
 \bzero_{p_2} &         \bzero &             \bI
\end{pmatrix},\quad
\bOmega_2=\begin{pmatrix}
          1-c & \bzero_{p_1}^T & -c\bone_{p_2}^T\\
 \bzero_{p_1} &            \bI &          \bzero\\
-c\bone_{p_2} &         \bzero &             \bI
\end{pmatrix},
\]
where $p_1=\lfloor(p-1)/2\rfloor$, $p_2=p-1-p_1$, and $0<|c|<1$. Then it is easy to verify that
\[
\bT_1=\bT_2=\begin{pmatrix}
               0 &              (2-c)\bone_{p_1}^T &              (2+c)\bone_{p_2}^T\\
(2-c)\bone_{p_1} & 2(\bone_{p_1}\bone_{p_1}^T-\bI) &       2\bone_{p_1}\bone_{p_2}^T\\
(2+c)\bone_{p_2} &       2\bone_{p_2}\bone_{p_1}^T & 2(\bone_{p_2}\bone_{p_2}^T-\bI)\\
\end{pmatrix}.
\]
This completes the proof.

\subsection{Concentration Inequalities}
To prepare for the proofs of Theorems 1 and 2, we first establish some useful concentration inequalities. For notational simplicity, the constants $C_1,C_2,\dots$ below may vary from line to line.

\begin{lemma}\label{lem:conc1}
Under Condition \ref{cond:tail}, there exist constants $C_1,C_2>0$ such that
\begin{equation}\label{eq:conc1}
P\left(\max_j\left|\frac{1}{n}\sum_{k=1}^nY_{kj}\right|\ge t\right)\le C_1pe^{-C_2nt^2}\\
\end{equation}
and
\begin{equation}\label{eq:conc2}
P\left(\max_{i,j}\left|\frac{1}{n}\sum_{k=1}^nY_{ki}Y_{kj}-EY_iY_j\right|\ge t\right)\le C_1p^2e^{-C_2nt^2}
\end{equation}
for sufficiently small $t>0$. Moreover, if $\log p=o(n^{1/5})$, then there exists a constant $C_3>0$ such that
\begin{equation}\label{eq:conc4}
P\left(\max_{i,j,\ell,m}\left|\frac{1}{n}\sum_{k=1}^nY_{ki}Y_{kj}Y_{k\ell}Y_{km}-EY_iY_jY_{\ell}Y_m\right|\ge\ve\right)=O(p^{-C_3})
\end{equation}
for every constant $\ve>0$.
\end{lemma}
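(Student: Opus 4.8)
The three bounds are of increasing difficulty, so the plan is to treat them in order, escalating from sub-Gaussian to sub-exponential to sub-Weibull tails, each time pairing a single-index tail estimate with a union bound over the appropriate number of indices.

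For \eqref{eq:conc1} I would start from Condition~\ref{cond:tail}, which makes each $Y_j$ uniformly sub-Gaussian. Since $EY_j=0$, this yields a moment generating function bound $E\exp(sY_j)\le\exp(Cs^2)$ for all $s$ and some universal $C$; by independence $E\exp(s\sum_k Y_{kj})\le\exp(Cns^2)$, and a Chernoff bound optimized in $s$ gives $P(|n^{-1}\sum_k Y_{kj}|\ge t)\le 2\exp(-cnt^2)$. A union bound over $j=1,\dots,p$ supplies the factor $p$. The restriction to small $t$ is inessential here and is retained only to align with the sub-exponential regime of the next part.

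For \eqref{eq:conc2} the key observation is that a product of two sub-Gaussian variables is sub-exponential. Using $|Y_iY_j|\le(Y_i^2+Y_j^2)/2$ together with Cauchy--Schwarz, $E\exp(s\,Y_iY_j)\le\{E\exp(sY_i^2)E\exp(sY_j^2)\}^{1/2}\le 2$ whenever $|s|\le\alpha$, so the centered product $Y_iY_j-EY_iY_j$ has a moment generating function that is finite and uniformly bounded near the origin. I would then apply Bernstein's inequality for sums of i.i.d.\ sub-exponential variables to obtain $P(|n^{-1}\sum_k(Y_{ki}Y_{kj}-EY_iY_j)|\ge t)\le 2\exp(-cn\min(t^2,t))$; for sufficiently small $t$ the $t^2$ term dominates, and a union bound over the $p^2$ pairs yields the claim.

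The fourth-order bound \eqref{eq:conc4} is the main obstacle, and is the reason the hypothesis $\log p=o(n^{1/5})$ from Condition~\ref{cond:sparse} is invoked. A product $Z=Y_iY_jY_\ell Y_m$ of four sub-Gaussian variables is only sub-Weibull of order $1/2$: from $|Z|\le\frac14(Y_i^4+Y_j^4+Y_\ell^4+Y_m^4)$ and Condition~\ref{cond:tail} one gets a tail $P(|Z|>z)\le C\exp(-c\sqrt{z})$, which possesses no exponential moments, so the Chernoff/Bernstein route fails outright. I would therefore truncate at a level $M_n\to\infty$, splitting $Z_k=Z_kI(|Z_k|\le M_n)+Z_kI(|Z_k|>M_n)$. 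The bounded part has summands in $[-M_n,M_n]$ with uniformly bounded variance (all moments of $Z$ are finite), so Bernstein gives a deviation probability of order $\exp(-cn\varepsilon/M_n)$; the tail part is controlled by $nP(|Z|>M_n)\le Cn\exp(-c\sqrt{M_n})$, and the truncation bias $|EZ\,I(|Z|>M_n)|$ is of the same order up to polynomial factors. After a union bound over the $p^4$ tuples, the two competing requirements are $\sqrt{M_n}\gtrsim\log p$ (to annihilate the tail and the bias) and $M_n\lesssim n/\log p$ (so the Bernstein exponent dominates $\log p$); these are jointly feasible as soon as $(\log p)^3=o(n)$, which is implied \emph{a fortiori} by $\log p=o(n^{1/5})$. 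Choosing for instance $M_n\asymp(\log p)^2$ then renders each contribution $O(p^{-C_3})$. The delicate point is purely the bookkeeping: a single choice of $M_n$ must simultaneously suppress the tail, the bias, and the Bernstein term after inflation by the factor $p^4$.
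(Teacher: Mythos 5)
Your proposal is correct and follows essentially the same route as the paper: the first two inequalities are handled by the standard sub-Gaussian and sub-exponential concentration arguments with union bounds (the paper simply cites Exercise 2.27 of Boucheron, Lugosi, and Massart), and for the fourth-order bound the paper uses exactly your truncation scheme, with truncation level $K\asymp(\log p+\log n)^2$, sub-Weibull tail control $P(|Y_iY_jY_\ell Y_m|>K)\le 8e^{-\alpha\sqrt{K}/8}$ for both the excess and the bias, and a union bound over the $p^4$ tuples. The only difference is that the paper applies Hoeffding rather than Bernstein to the truncated sums, which puts $K^2$ instead of $K$ in the exponent and is precisely why it needs the full strength of $\log p=o(n^{1/5})$, whereas your Bernstein variant would get by with $(\log p)^3=o(n)$; both are valid under the stated hypothesis.
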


\begin{proof}
Inequalities \eqref{eq:conc1} and \eqref{eq:conc2} follow, for example, from Exercise 2.27 of \citet{Bouc:Lugo:Mass:conc:2013}; see also \citet{Bick:Levi:cova:2008}.

To prove \eqref{eq:conc4}, let $Z_{kijlm}=Y_{ki}Y_{kj}Y_{k\ell}Y_{km}$ and $Z_{ijlm}=Y_iY_jY_{\ell}Y_m$. Note first that, by Condition \ref{cond:tail} and the sub-Gaussian tail bound, for any $K>0$ and $i,j,\ell,m$,
\[
P(|Z_{ijlm}|>K)\le4P(|Y_j|>K^{1/4})\le8e^{-\alpha\sqrt{K}/8}.
\]
Hence,
\begin{align*}
E|Z_{ijlm}|I(|Z_{ijlm}|>K)&=\int_0^{\infty}P(|Z_{ijlm}|I(|Z_{ijlm}|>K)>z)\,dz\\
&=KP(|Z_{ijlm}|>K)+\int_K^{\infty}P(|Z_{ijlm}|>z)\,dz\\
&\le8Ke^{-\alpha\sqrt{K}/8}+\int_K^{\infty}8e^{-\alpha\sqrt{z}/8}\,dz\\
&=\frac{8}{\alpha^2}(\alpha^2K+16\alpha\sqrt{K}+128)e^{-\alpha\sqrt{K}/8},
\end{align*}
which is less than $\ve/4$ if we choose $K$ sufficiently large. Then we have
\begin{align*}
&P\left(\max_{i,j,\ell,m}\left|\frac{1}{n}\sum_{k=1}^nZ_{kijlm}-EZ_{ijlm}\right|\ge\ve\right)\\
&\quad\le P\left(\max_{i,j,\ell,m}\left|\frac{1}{n}\sum_{k=1}^nZ_{kijlm}I(|Z_{kijlm}|\le K)-EZ_{ijlm}I(|Z_{ijlm}|\le K)\right|\ge\frac{\ve}{2}\right)\\
&\quad\relphantom{\le}{}+P\left(\max_{i,j,\ell,m}\left|\frac{1}{n}\sum_{k=1}^nZ_{kijlm}I(|Z_{kijlm}|>K)\right|\ge\frac{\ve}{4}\right)\\
&\quad\equiv T_1+T_2.
\end{align*}
By Hoeffding's inequality and the union bound,
\[
T_1\le2p^4\exp\left(-\frac{n\ve^2}{8K^2}\right).
\]
Also, by Condition 1 and the sub-Gaussian tail bound,
\[
T_2\le P\left(\max_{k,i,j,\ell,m}|Z_{kijlm}|>K\right)\le P\left(\max_{k,j}|Y_{kj}|>K^{1/4}\right)\le2npe^{-\alpha\sqrt{K}/8}.
\]
Combining both terms, choosing $K=C^2(\log p+\log n)^2$ with $C>8/\alpha$, and noting $\log p=o(n^{1/5})$, we arrive at
\begin{align*}
&P\left(\max_{i,j,\ell,m}\left|\frac{1}{n}\sum_{k=1}^nZ_{kijlm}-EZ_{ijlm}\right|\ge\ve\right)\\
&\quad\le2p^4\exp\left(-\frac{n\ve^2}{8C^4(\log p+\log n)^4}\right)+2(np)^{1-C\alpha/8}\\
&\quad=O(p^{-C_3})
\end{align*}
for some $C_3>0$. This proves \eqref{eq:conc4} and completes the proof.
\end{proof}

\begin{lemma}\label{lem:conc2}
Under Conditions \ref{cond:tail}--\ref{cond:moment4}, there exist constants $C_1,C_2,C_3>0$ such that
\begin{equation}\label{eq:conc_theta}
P\left(\max_{i,j}|\hat\theta_{ij}-\theta_{ij}|\ge\ve\right)=O(p^{-C_3})
\end{equation}
and
\begin{equation}\label{eq:conc_gamma}
P\left(\max_{i,j}|\hat\gamma_{ij}-\omega_{ij}^0|/\sqrt{\hat\theta_{ij}}\ge C_1\sqrt{\frac{\log p}{n}}+C_2\frac{s_0(p)}{p}\right)=O(p^{-C_3})
\end{equation}
for every constant $\ve>0$.
\end{lemma}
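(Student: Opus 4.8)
The plan is to push every quantity back onto the log-basis $\bY$, where Lemma \ref{lem:conc1} applies. The key reduction is that the unknown normalization in \eqref{eq:norm} cancels in the centered log-ratio: since $\log X_{kj}=Y_{kj}-\log\sum_i W_{ki}$ and $\log g(\bX_k)=p^{-1}\sum_l Y_{kl}-\log\sum_i W_{ki}$, the normalizing term drops out and $\gamma_{kj}=Y_{kj}-\bar Y_{k\cdot}$, where $\bar Y_{k\cdot}=p^{-1}\sum_l Y_{kl}$. Writing $\mathbf{H}=\bI-p^{-1}\bone\bone^T$ for the centering matrix, this reads $\gamma_{kj}=(\mathbf{H}\bY_k)_j$, so that $\what\bGamma=\mathbf{H}\what\bSigma\mathbf{H}$ and $\bGamma_0=\mathbf{H}\bOmega_0\mathbf{H}$, where $\what\bSigma$ denotes the ordinary sample covariance matrix of the $\bY_k$. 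I would first record the elementary inequality $\|\mathbf{H}\bA\mathbf{H}\|_{\max}\le4\|\bA\|_{\max}$ for symmetric $\bA$, which holds because $(\mathbf{H}\bA\mathbf{H})_{ij}=a_{ij}-a_{i\cdot}-a_{\cdot j}+a_{\cdot\cdot}$ is a double-centered entry.

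Granting \eqref{eq:conc_theta} for the moment, I would establish \eqref{eq:conc_gamma} by splitting $\hat\gamma_{ij}-\omega_{ij}^0=(\hat\gamma_{ij}-\gamma_{ij}^0)+(\gamma_{ij}^0-\omega_{ij}^0)$ into an estimation error and an approximation error. For the estimation error, the reduction gives $\hat\gamma_{ij}-\gamma_{ij}^0=(\mathbf{H}(\what\bSigma-\bOmega_0)\mathbf{H})_{ij}$, so the max-norm bound makes it at most $4\|\what\bSigma-\bOmega_0\|_{\max}$; writing $\what\bSigma_{ij}=n^{-1}\sum_k Y_{ki}Y_{kj}-\bar Y_{\cdot i}\bar Y_{\cdot j}$ with $\bar Y_{\cdot i}=n^{-1}\sum_k Y_{ki}$ and applying \eqref{eq:conc1} and \eqref{eq:conc2} bounds this by a multiple of $\sqrt{(\log p)/n}$ with probability $1-O(p^{-C_3})$, the cross term $\bar Y_{\cdot i}\bar Y_{\cdot j}$ being of the smaller order $(\log p)/n$. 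For the approximation error, Proposition \ref{prop:ident} together with the bound $\|\bOmega_0\|_1\le M^{1-q}s_0(p)$ from Condition \ref{cond:sparse} gives $|\gamma_{ij}^0-\omega_{ij}^0|\le3M^{1-q}s_0(p)/p$. Finally, on the event from \eqref{eq:conc_theta} where $\hat\theta_{ij}\ge\tau/2$ (using Condition \ref{cond:lower}), dividing by $\sqrt{\hat\theta_{ij}}$ only inflates the bound by the constant $\sqrt{2/\tau}$, so choosing $C_1,C_2$ large yields \eqref{eq:conc_gamma}.

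The substance of the lemma is \eqref{eq:conc_theta}, which I would attack by comparing $\hat\theta_{ij}$ with the raw-centered analog $\tilde\theta_{ij}=n^{-1}\sum_k(\tilde U_{kij}-\what\bSigma_{ij})^2$ built from $\tilde U_{kij}=(Y_{ki}-\bar Y_{\cdot i})(Y_{kj}-\bar Y_{\cdot j})$. Expanding $\tilde\theta_{ij}$ into sample moments of $\bY$ of orders up to four and invoking \eqref{eq:conc1}, \eqref{eq:conc2} and, crucially, the fourth-moment bound \eqref{eq:conc4}, one obtains $\max_{i,j}|\tilde\theta_{ij}-\Var(Y_iY_j)|\to0$ with probability $1-O(p^{-C_3})$, and $\Var(Y_iY_j)=\theta_{ij}$ since $\mu_j=0$. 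It then remains to show that replacing $\tilde U_{kij}$ by $\gamma_{kij}$ changes $\hat\theta_{ij}$ negligibly. The discrepancy is driven by $\gamma_{ki}-\bar\gamma_i=(Y_{ki}-\bar Y_{\cdot i})-(\bar Y_{k\cdot}-\bar Y_{\cdot\cdot})$, that is, by the coordinate averages $\bar Y_{k\cdot}$, and here the blessing of dimensionality intervenes: $\Var(\bar Y_{k\cdot})=p^{-2}\bone^T\bOmega_0\bone\le p^{-1}\|\bOmega_0\|_1=O(s_0(p)/p)\to0$, so the $\bar Y_{k\cdot}$ are uniformly small in mean square.

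The main obstacle is precisely this last comparison. Expanding $(\gamma_{ki}-\bar\gamma_i)(\gamma_{kj}-\bar\gamma_j)$ and squaring produces cross-terms in which four centered log-basis coordinates are summed against the diverging coordinate-averaging index; bounding their expectations uniformly over $(i,j)$ is exactly the role of Condition \ref{cond:moment4}, through estimates of the type $|\sum_m EY_iY_jY_\ell Y_m|\le s_1(p)=o(p)$. The delicate part is tracking the two distinct centerings simultaneously—the sample-wise centering over $k$ and the coordinate-wise centering over the index running to $p$—and verifying that every resulting cross-term is $o(1)$ uniformly in $(i,j)$, so that $\max_{i,j}|\hat\theta_{ij}-\tilde\theta_{ij}|\to0$ with probability $1-O(p^{-C_3})$. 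Once that is in hand, \eqref{eq:conc_theta} follows, and with it the denominator control needed to close \eqref{eq:conc_gamma}.
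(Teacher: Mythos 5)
Your proposal is correct and rests on the same pillars as the paper's own proof: the reduction $\gamma_{kj}=Y_{kj}-p^{-1}\sum_{\ell}Y_{k\ell}$, the concentration inequalities of Lemma \ref{lem:conc1}, Condition \ref{cond:moment4} for the fourth-order cross-terms generated by coordinate averaging, Condition \ref{cond:sparse} for the $s_0(p)/p$ approximation error, and Condition \ref{cond:lower} together with \eqref{eq:conc_theta} to control the random denominator. Where you genuinely differ is in the decompositions, and your route to \eqref{eq:conc_gamma} is the cleaner one: splitting $\hat\gamma_{ij}-\omega_{ij}^0$ through the population quantity $\gamma_{ij}^0$ lets Proposition \ref{prop:ident} absorb the entire approximation error, while the estimation error reduces, via $\what\bGamma-\bGamma_0=\mathbf{H}(\what\bSigma-\bOmega_0)\mathbf{H}$ and the double-centering bound $\|\mathbf{H}\bA\mathbf{H}\|_{\max}\le4\|\bA\|_{\max}$, to max-norm concentration of the ordinary sample covariance of $\bY$, which is exactly \eqref{eq:conc1}--\eqref{eq:conc2}; the paper instead splits through the uncentered sample quantity $\tilde\gamma_{ij}=n^{-1}\sum_k\gamma_{ki}\gamma_{kj}$ and re-derives the $s_0(p)/p$ term by expanding $n^{-1}\sum_kY_{ki}\bar Y_k$ against Condition \ref{cond:sparse}, so both arguments land on the same intermediate bound \eqref{eq:conc_top}, but yours with less bookkeeping. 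For \eqref{eq:conc_theta} the difference is only the order in which the two centerings are peeled off (your intermediate variance estimate is sample-centered in $k$ but built from raw $Y$'s, whereas the paper's is coordinate-centered but not sample-centered), which is immaterial. The one caveat: the step you yourself flag as the main obstacle---expanding the cross-terms driven by $\bar Y_{k\cdot}$ and verifying that each is $o(1)$ uniformly in $(i,j)$ with polynomial-in-$p$ probability control---is described rather than executed, and your mean-square observation $\Var(\bar Y_{k\cdot})=O(s_0(p)/p)$ alone would not deliver that uniformity; however, the mechanism you name (the event from \eqref{eq:conc4} combined with $\max_{i,j,\ell}|p^{-1}\sum_mEY_iY_jY_{\ell}Y_m|\le s_1(p)/p$) is precisely how the paper's bound on its term $T_1$ closes, so the plan is sound and would go through.
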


\begin{proof}
We first prove \eqref{eq:conc_theta}. Define
\[
\tilde\theta_{ij}=\frac{1}{n}\sum_{k=1}^n(\gamma_{ki}\gamma_{kj}-\tilde\gamma_{ij})^2,
\]
where $\tilde\gamma_{ij}=n^{-1}\sum_{k=1}^n\gamma_{ki}\gamma_{kj}$. We then write
\begin{align}
\hat\theta_{ij}-\tilde\theta_{ij}&=\frac{1}{n}\sum_{k=1}^n\{(\gamma_{ki}\gamma_{kj}-\tilde\gamma_{ij})-\gamma_{ki}\bar\gamma_j-\gamma_{kj}\bar\gamma_i +2\bar\gamma_i\bar\gamma_j\}^2-\frac{1}{n}\sum_{k=1}^n(\gamma_{ki}\gamma_{kj}-\tilde\gamma_{ij})^2\notag\\
&=\frac{2}{n}\sum_{k=1}^n(\gamma_{ki}\gamma_{kj}-\tilde\gamma_{ij})(-\gamma_{ki}\bar\gamma_j-\gamma_{kj}\bar\gamma_i +2\bar\gamma_i\bar\gamma_j) +\frac{1}{n}\sum_{k=1}^n(-\gamma_{ki}\bar\gamma_j-\gamma_{kj}\bar\gamma_i +2\bar\gamma_i\bar\gamma_j)^2.\label{eq:theta_diff}
\end{align}
Note that, by definition, $\gamma_{kj}=Y_{kj}-\bar{Y}_k$, where $\bar{Y}_k=p^{-1}\sum_{j=1}^pY_{kj}$. Define $\gamma_j=Y_j-\bar{Y}$, where $\bar{Y}=p^{-1}\sum_{j=1}^pY_j$. Since $Y_j$ are uniformly sub-Gaussian by Condition \ref{cond:tail}, $\gamma_j$ are also uniformly sub-Gaussian. Using a truncation argument similar to that for proving \eqref{eq:conc4}, we can show that
\[
P\left(\max_{i,j}\left|\frac{1}{n}\sum_{k=1}^n\gamma_{ki}^2\gamma_{kj}-E\gamma_i^2\gamma_j\right|\ge C_1\right)=O(p^{-C_3})
\]
for some $C_1,C_3>0$. The sub-Gaussian tails imply also that $E\gamma_i^2|\gamma_j|\le\frac{1}{2}(E\gamma_i^4+E\gamma_j^2)=O(1)$. Combining these two pieces yields
\[
P\left(\max_{i,j}\left|\frac{1}{n}\sum_{k=1}^n\gamma_{ki}^2\gamma_{kj}\right|\ge C_1\right)=O(p^{-C_3}).
\]
It follows from Lemma \ref{lem:conc1} that
\[
P\left(\max_j|\bar\gamma_j|\ge C_1\sqrt{\frac{\log p}{n}}\right)=O(p^{-C_3}).
\]
The above two inequalities together imply
\begin{equation}\label{eq:conc_bar}
P\left(\max_{i,j}\left|\frac{1}{n}\sum_{k=1}^n\gamma_{ki}^2\gamma_{kj}\bar\gamma_j\right|\ge C_1\sqrt{\frac{\log p}{n}}\right)=O(p^{-C_3}).
\end{equation}
We can similarly bound the other terms in \eqref{eq:theta_diff} and obtain
\begin{equation}\label{eq:theta_tilde}
P\left(\max_{i,j}|\hat\theta_{ij}-\tilde\theta_{ij}|\ge C_1\sqrt{\frac{\log p}{n}}\right)=O(p^{-C_3}).
\end{equation}

Next, write
\begin{align*}
\tilde\theta_{ij}-\theta_{ij}&=\frac{1}{n}\sum_{k=1}^n(\gamma_{ki}\gamma_{kj}-\tilde\gamma_{ij})^2-\Var(Y_iY_j)\\
&=\frac{1}{n}\sum_{k=1}^n\gamma_{ki}^2\gamma_{kj}^2-EY_i^2Y_j^2-\{\tilde\gamma_{ij}^2-(\omega_{ij}^0)^2\}\\
&\equiv T_1+T_2.
\end{align*}
To bound the term $T_1$, we further write
\begin{align*}
T_1&=\frac{1}{n}\sum_{k=1}^n\{(Y_{ki}-\bar{Y}_k)(Y_{kj}-\bar{Y}_k)\}^2-EY_i^2Y_j^2\\
&=\frac{1}{n}\sum_{k=1}^n\left(Y_{ki}Y_{kj}-Y_{ki}\bar{Y}_k-Y_{kj}\bar{Y}_k+\bar{Y}_k^2\right)^2-EY_i^2Y_j^2\\
&=\frac{1}{n}\sum_{k=1}^nY_{ki}^2Y_{kj}^2-EY_i^2Y_j^2+\frac{2}{n}\sum_{k=1}^nY_{ki}Y_{kj}(-Y_{ki}\bar{Y}_k-Y_{kj}\bar{Y}_k+\bar{Y}_k^2)\\ &\relphantom{=}{}+\frac{1}{n}(-Y_{ki}\bar{Y}_k-Y_{kj}\bar{Y}_k+\bar{Y}_k^2)^2.
\end{align*}
Consider the event $A_1$ on which
\[
\max_{i,j,\ell,m}\left|\frac{1}{n}\sum_{k=1}^nY_{ki}Y_{kj}Y_{k\ell}Y_{km}-EY_iY_jY_{\ell}Y_m\right|\le\ve_1.
\]
Then, on $A_1$, we have
\[
\left|\frac{1}{n}\sum_{k=1}^nY_{ki}^2Y_{kj}^2-EY_i^2Y_j^2\right|\le\ve_1.
\]
To bound the next term in $T_1$, we write
\begin{align*}
\frac{1}{n}\sum_{k=1}^nY_{ki}^2Y_{kj}\bar{Y}_k&=\frac{1}{n}\sum_{k=1}^nY_{ki}^2Y_{kj}\bar{Y}_k-EY_i^2Y_j\bar{Y}+EY_i^2Y_j\bar{Y}\\
&=\frac{1}{p}\sum_{\ell=1}^p\left(\frac{1}{n}\sum_{k=1}^nY_{ki}^2Y_{kj}Y_{k\ell}-EY_i^2Y_jY_\ell\right)+\frac{1}{p}\sum_{\ell=1}^pEY_i^2Y_jY_\ell,
\end{align*}
which, on $A_1$ and by Condition \ref{cond:moment4}, is bounded by $\ve_1+s_1(p)/p$. We can similarly bound the other terms in $T_1$ and obtain, on $A_1$,
\begin{equation}\label{eq:t1}
|T_1|\le16\ve_1+15s_1(p)/p.
\end{equation}

To bound the term $T_2$, note that
\begin{align}
\tilde\gamma_{ij}-\omega_{ij}^0&=\frac{1}{n}\sum_{k=1}^n(Y_{ki}-\bar{Y}_k)(Y_{kj}-\bar{Y}_k)-EY_iY_j\notag\\
&=\frac{1}{n}\sum_{k=1}^nY_{ki}Y_{kj}-EY_iY_j+\frac{1}{n}\sum_{k=1}^n(-Y_{ki}\bar{Y}_k-Y_{kj}\bar{Y}_k+\bar{Y}_k^2).\label{eq:t2diff}
\end{align}
Consider the event $A_2$ on which
\[
\max_{i,j}\left|\frac{1}{n}\sum_{k=1}^nY_{ki}Y_{kj}-EY_iY_j\right|\le\ve_2.
\]
To bound the next term in \eqref{eq:t2diff}, we write
\begin{align*}
\frac{1}{n}\sum_{k=1}^nY_{ki}\bar{Y}_k&=\frac{1}{n}\sum_{k=1}^nY_{ki}\bar{Y}_k-EY_i\bar{Y}+EY_i\bar{Y}\\
&=\frac{1}{p}\sum_{j=1}^p\left(\frac{1}{n}\sum_{k=1}^nY_{ki}Y_{kj}-EY_iY_j\right)+\frac{1}{p}\sum_{j=1}^p\omega_{ij}^0,
\end{align*}
which, on $A_2$ and by Condition \ref{cond:sparse}, is bounded by $\ve_2+M^{1-q}s_0(p)/p$. We can similarly bound the other terms in \eqref{eq:t2diff} and obtain, on $A_2$,
\begin{equation}\label{eq:t2bound}
|\tilde\gamma_{ij}-\omega_{ij}^0|\le4\ve_2+3M^{1-q}s_0(p)/p.
\end{equation}
Note also that, on $A_2$,
\[
|\tilde\gamma_{ij}+\omega_{ij}^0|\le|\tilde\gamma_{ij}-\omega_{ij}^0|+2|\omega_{ij}^0|\le4\ve_2+3M^{1-q}s_0(p)/p+2M.
\]
Hence, on $A_2$, we have
\begin{equation}\label{eq:t2}
|T_2|=|\tilde\gamma_{ij}-\omega_{ij}^0||\tilde\gamma_{ij}+\omega_{ij}^0|\le(4\ve_2+3M^{1-q}s_0(p)/p)(4\ve_2+3M^{1-q}s_0(p)/p+2M).
\end{equation}

Finally, it follows from Lemma \ref{lem:conc1} that the event $A_1\cap A_2$ occurs with probability at least $1-O(p^{-C_3})$ for all constants $\ve_1,\ve_2>0$ and some constant $C_3>0$. Combining \eqref{eq:theta_tilde}, \eqref{eq:t1}, and \eqref{eq:t2} and noting $\log p=o(n)$, $s_0(p)=o(p)$, and $s_1(p)=o(p)$, we arrive at \eqref{eq:conc_theta}.

It remains to prove \eqref{eq:conc_gamma}. We first write
\begin{align*}
\hat\gamma_{ij}-\tilde\gamma_{ij}&=\frac{1}{n}\sum_{k=1}^n(\gamma_{ki}-\bar\gamma_i)(\gamma_{kj}-\bar\gamma_j)-\frac{1}{n}\sum_{k=1}^n\gamma_{ki}\gamma_{kj}\\
&=\frac{1}{n}\sum_{k=1}^n(-\gamma_{ki}\bar\gamma_i-\gamma_{kj}\bar\gamma_j+\bar\gamma_i\bar\gamma_j).
\end{align*}
Using arguments similar to those for proving \eqref{eq:conc_bar}, we can show that
\[
P\left(\max_{i,j}\left|\frac{1}{n}\sum_{k=1}^n\gamma_{ki}\bar\gamma_j\right|\ge C_1\sqrt{\frac{\log p}{n}}\right)=O(p^{-C_3}).
\]
We can similarly bound the other two terms and obtain
\[
P\left(\max_{i,j}|\hat\gamma_{ij}-\tilde\gamma_{ij}|\ge C_1\sqrt{\frac{\log p}{n}}\right)=O(p^{-C_3}).
\]
Taking $\ve_2=C_1\sqrt{(\log p)/n}$ in \eqref{eq:t2bound}, we have
\[
P\left(\max_{i,j}|\tilde\gamma_{ij}-\omega_{ij}^0|\ge C_1\sqrt{\frac{\log p}{n}}+C_2\frac{s_0(p)}{p}\right)=O(p^{-C_3}).
\]
The above two inequalities together imply
\begin{equation}\label{eq:conc_top}
P\left(\max_{i,j}|\hat\gamma_{ij}-\omega_{ij}^0|\ge C_1\sqrt{\frac{\log p}{n}}+C_2\frac{s_0(p)}{p}\right)=O(p^{-C_3}).
\end{equation}
From Condition \ref{cond:lower} and \eqref{eq:conc_theta} with $\ve_2=\tau/2$, it follows that $|\hat\theta_{ij}|\ge\tau/2$ with probability at least $1-O(p^{-C_3})$. This, together with \eqref{eq:conc_top}, implies \eqref{eq:conc_gamma} and completes the proof.
\end{proof}

\subsection{Proof of Theorem \ref{thm:rate}}
By the triangle inequality, we have
\begin{equation}\label{eq:tri}
\|\what\bOmega-\bOmega_0\|_1\le\sum_{j=1}^p|S_{\lambda_{ij}}(\omega_{ij}^0)-\omega_{ij}^0| +\sum_{j=1}^p|S_{\lambda_{ij}}(\hat\gamma_{ij})-S_{\lambda_{ij}}(\omega_{ij}^0)|.
\end{equation}
Using Conditions (i) and (ii) that define a general thresholding function, the first term above is bounded by
\begin{align*}
&\sum_{j=1}^p|\omega_{ij}^0|I(|\omega_{ij}^0|\le\lambda_{ij})+\sum_{j=1}^p\lambda_{ij}I(|\omega_{ij}^0|>\lambda_{ij})\\
&\quad=\sum_{j=1}^p|\omega_{ij}^0|^q|\omega_{ij}^0|^{1-q}I(|\omega_{ij}^0|\le\lambda_{ij})+\sum_{j=1}^p\lambda_{ij}^q\lambda_{ij}^{1-q} I(|\omega_{ij}^0|>\lambda_{ij})\\
&\quad\le\sum_{j=1}^p|\omega_{ij}^0|^q\lambda_{ij}^{1-q}.
\end{align*}
On the other hand, the second term in \eqref{eq:tri} is bounded by
\begin{align*}
&2\sum_{j=1}^p|\hat\gamma_{ij}|I(|\hat\gamma_{ij}|>\lambda_{ij},|\omega_{ij}^0|\le\lambda_{ij}) +2\sum_{j=1}^p|\omega_{ij}^0|I(|\hat\gamma_{ij}|\le\lambda_{ij},|\omega_{ij}^0|>\lambda_{ij})\\
&\quad{}+\sum_{j=1}^p|S_{\lambda_{ij}}(\hat\gamma_{ij})-S_{\lambda_{ij}}(\omega_{ij}^0)|I(|\hat\gamma_{ij}|>\lambda_{ij},|\omega_{ij}^0|>\lambda_{ij})\\
&\quad\equiv T_1+T_2+T_3.
\end{align*}
To bound the term $T_1$, we write
\begin{align*}
\frac{T_1}{2}&\le\sum_{j=1}^p|\hat\gamma_{ij}-\omega_{ij}^0|I(|\hat\gamma_{ij}|>\lambda_{ij},|\omega_{ij}^0|\le\lambda_{ij}/2)\\ &\relphantom{\le}{}+\sum_{j=1}^p|\hat\gamma_{ij}-\omega_{ij}^0|I(|\hat\gamma_{ij}|>\lambda_{ij},\lambda_{ij}/2<|\omega_{ij}^0|\le\lambda_{ij}) +\sum_{j=1}^p|\omega_{ij}^0|I(|\hat\gamma_{ij}|>\lambda_{ij},|\omega_{ij}^0|\le\lambda_{ij})\\
&\equiv T_4+T_5+T_6.
\end{align*}
Consider the event $B_1$ on which $|\hat\gamma_{ij}-\omega_{ij}^0|\le\lambda_{ij}/2$ for all $i,j$. On $B_1$, we have
\begin{gather*}
T_4\le\sum_{j=1}^p|\hat\gamma_{ij}-\omega_{ij}^0|I(|\hat\gamma_{ij}-\omega_{ij}^0|>\lambda_{ij}/2)=0,\\
T_5\le\sum_{j=1}^p\left(\frac{\lambda_{ij}}{2}\right)^q\left(\frac{\lambda_{ij}}{2}\right)^{1-q} I(|\hat\gamma_{ij}|>\lambda_{ij},\lambda_{ij}/2<|\omega_{ij}^0|\le\lambda_{ij})\le\frac{1}{2^{1-q}}\sum_{j=1}^p|\omega_{ij}^0|^q\lambda_{ij}^{1-q},
\end{gather*}
and
\[
T_6\le\sum_{j=1}^p|\omega_{ij}^0|^q\lambda_{ij}^{1-q}.
\]
Combining these pieces yields
\[
T_1\le2\left(1+\frac{1}{2^{1-q}}\right)\sum_{j=1}^p|\omega_{ij}^0|^q\lambda_{ij}^{1-q}\le4\sum_{j=1}^p|\omega_{ij}^0|^q\lambda_{ij}^{1-q}.
\]
We can similarly bound the terms $T_2$ and $T_3$ on $B_1$:
\begin{align*}
T_2&\le2\sum_{j=1}^p\left(|\hat\gamma_{ij}-\omega_{ij}^0|+|\hat\gamma_{ij}|\right)I(|\hat\gamma_{ij}|\le\lambda_{ij},|\omega_{ij}^0|>\lambda_{ij})\\
&\le2\sum_{j=1}^p\left(\frac{\lambda_{ij}}{2}+\lambda_{ij}\right)I(|\hat\gamma_{ij}|\le\lambda_{ij},|\omega_{ij}^0|>\lambda_{ij}) \le3\sum_{j=1}^p|\omega_{ij}^0|^q\lambda_{ij}^{1-q},\\
T_3&\le\sum_{j=1}^p\left(|\hat\gamma_{ij}-\omega_{ij}^0|+|S_{\lambda_{ij}}(\hat\gamma_{ij})-\hat\gamma_{ij}| +|S_{\lambda_{ij}}(\omega_{ij}^0)-\omega_{ij}^0|\right)I(|\hat\gamma_{ij}|>\lambda_{ij},|\omega_{ij}^0|>\lambda_{ij})\\
&\le\sum_{j=1}^p\left(\frac{\lambda_{ij}}{2}+\lambda_{ij}+\lambda_{ij}\right)I(|\hat\gamma_{ij}|>\lambda_{ij},|\omega_{ij}^0|>\lambda_{ij}) \le\frac{5}{2}\sum_{j=1}^p|\omega_{ij}^0|^q\lambda_{ij}^{1-q}.
\end{align*}
Collecting all terms, we obtain, on $B_1$,
\begin{equation}\label{eq:sum}
\|\what\bOmega-\bOmega_0\|_1\le\frac{21}{2}\sum_{j=1}^p|\omega_{ij}^0|^q\lambda_{ij}^{1-q}.
\end{equation}

Next, we consider the event $B_2$ on which $|\hat\theta_{ij}-\theta_{ij}|\le\tau$ for all $i,j$. From Condition \ref{cond:lower} we have, on $B_2$,
\begin{equation}\label{eq:theta_hat}
\hat\theta_{ij}\le|\hat\theta_{ij}-\theta_{ij}|+\theta_{ij}\le\tau+\theta_{ij}\le2\theta_{ij}.
\end{equation}
Note that, by Condition \ref{cond:tail},
\begin{equation}\label{eq:theta_upper}
\theta_{ij}\le EY_i^2Y_j^2\le\frac{1}{2}(EY_i^4+EY_j^4)\le\frac{2}{\alpha^2}.
\end{equation}
Taking $\lambda_{ij}=\lambda\sqrt{\hat\theta_{ij}}$ with $\lambda=C_1\sqrt{(\log p)/n}+C_2s_0(p)/p$ in \eqref{eq:sum} and applying \eqref{eq:theta_hat} and \eqref{eq:theta_upper}, we obtain, on $B_1\cap B_2$,
\[
\|\what\bOmega-\bOmega_0\|_1\le\frac{21}{2}\sum_{j=1}^p|\omega_{ij}^0|^q\lambda^{1-q}\left(\frac{2}{\alpha}\right)^{1-q} \le\frac{21}{\alpha}s_0(p)\left(C_1\sqrt{\frac{\log p}{n}}+C_2\frac{s_0(p)}{p}\right)^{1-q}.
\]
We conclude the proof by noting that the event $B_1\cap B_2$ occurs with probability $1-O(p^{-C_3})$ by Lemma \ref{lem:conc2} and that the spectral norm is bounded by the matrix $L_1$-norm.

\subsection{Proof of Theorem \ref{thm:supp}}
It follows from Condition (i) and \eqref{eq:conc_gamma} that
\begin{align*}
&P\left(\hat\omega_{ij}\ne0,\omega_{ij}^0=0\text{ for some }i,j\right)\le P\left(\max_{i,j}|\hat\gamma_{ij}-\omega_{ij}^0|\ge\lambda_{ij}\right)\notag\\ &\quad=P\left(\max_{i,j}|\hat\gamma_{ij}-\omega_{ij}^0|/\sqrt{\hat\theta_{ij}}\ge C_1\sqrt{\frac{\log p}{n}}+C_2\frac{s_0(p)}{p}\right)=O(p^{-C_3}),
\end{align*}
which proves \eqref{eq:sparsist}.

To prove \eqref{eq:supp}, note that, by Condition (ii),
\[
P\left(\sgn(\hat\omega_{ij})\ne\sgn(\omega_{ij}^0),\omega_{ij}^0\ne0\text{ for some }i,j\right)\le P\left(|\hat\gamma_{ij}-\omega_{ij}^0|\ge|\omega_{ij}^0|-\lambda_{ij}\text{ for some }i,j\right).
\]
Also, by taking $\ve=3\tau/4$ in \eqref{eq:conc_theta}, we have, with probability $1-O(p^{-C_3})$,
\[
\left|\sqrt{\hat\theta_{ij}}-\sqrt{\theta_{ij}}\right|=\frac{|\hat\theta_{ij}-\theta_{ij}|}{\sqrt{\hat\theta_{ij}}+\sqrt{\theta_{ij}}} \le\frac{3\tau/4}{\sqrt{\tau/4}+\sqrt{\tau}}=\frac{\sqrt{\tau}}{2},
\]
and hence
\begin{align*}
|\omega_{ij}^0|-\lambda_{ij}&\ge C\lambda\sqrt{\theta_{ij}}-\lambda\left(\sqrt{\hat\theta_{ij}}-\sqrt{\theta_{ij}}+\sqrt{\theta_{ij}}\right)\\
&\ge (C-1)\lambda\sqrt{\tau}-\lambda\frac{\sqrt{\tau}}{2}=\left(C-\frac{3}{2}\right)\lambda\sqrt{\tau}
\end{align*}
for all $i,j$. Now applying \eqref{eq:conc_top} yields
\[
P\left(\sgn(\hat\omega_{ij})\ne\sgn(\omega_{ij}^0),\omega_{ij}^0\ne0\text{ for some }i,j\right)=O(p^{-C_3}),
\]
which, together with \eqref{eq:sparsist}, proves the result.

\setstretch{1.24}
\bibliographystyle{jasa}
\bibliography{coat}

\end{document}